\pgfplotsset{compat=1.16}
\newtheorem{assumption}{Assumption}
\newtheorem{theorem}{Theorem}
\newtheorem{corollary}{Corollary}
\newtheorem{lemma}{Lemma}
\theoremstyle{definition}
\newtheorem{definition}{Definition}
\def\tsc#1{\csdef{#1}{\textsc{\lowercase{#1}}\xspace}}
\begin{document}
\let\WriteBookmarks\relax
\def\floatpagepagefraction{1}
\def\textpagefraction{.001}
\shorttitle{On Feasibility of Defect Prediction Models in Real-World Testing Environments}
%\shortauthors{Umamaheswara Sharma and Ravichandra}

\title [mode = title]{The Probabilistic Bounds on the Feasibility of the Defect Prediction Models in Real-World Testing Environments}
\author[1]{Umamaheswara Sharma B}
\cormark[1]
\fnmark[1]
\ead[url]{uma.phd@student.nitw.ac.in}

\author[1]{Ravichandra Sadam}
\fnmark[2]
\ead[url]{ravic@nitw.ac.in}
\address[1]{Department of Computer Science and Engineering, National Institute of Technology, Warangal}

\cortext[cor1]{Corresponding author}
\fntext[fn1]{Research Scholar, Department of Computer Science and Engineering, National Institute of Technology, Warangal, Telangana, India, 506001}

\fntext[fn2]{Associate Professor, Department of Computer Science and Engineering, National Institute of Technology, Warangal, Telangana, India, 506001}

\begin{abstract}
The research on developing software defect prediction (SDP) models is targeted at reducing the workload on the tester and, thereby, the time spent on the targeted module. However, while a considerable amount of research has been done on developing prediction models or attempting to mitigate the related issues in developing prediction models, it is still unknown whether the developed prediction model really works in real-world testing environments or not. With this article, we bridge this research gap of finding the feasibility of the developed binary defect prediction model in the real-world testing environments. Because machine learning (ML) applications span over many domains, we hope this article may provide sufficient ground to do research on analysing the feasibility of developed prediction models in the related applications in real-time scenarios.
\end{abstract}
\begin{keywords}
Software Defect Prediction \sep Software Reliability \sep Machine Learning \sep Real-world Testing \sep Chernoff Bounds
\end{keywords}

\maketitle

\begin{table*}[ht]
\caption{Utilised symbols and its corresponding explanations}
\centering
\begin{tabular}{l|l}
\hline
\textbf{Notation} &  \textbf{Explanation}\\ \hline
	$P$ & A software project\\
%    $\epsilon$ & An error\\
    $p$ & A probability value\\
	$M_i$ & A software module in the project $P$\\
    $S_n$ & The set of $n$ newly developed software modules in the project, $P$\\
	%$\mathcal{D}_{M_i}$ & Number of defects reported in the module $M_i$\\
	$l$ & The number of newly developed modules that are predicted to be clean\\
	\textit{FOR} & The False Omission Rate\\
	\textbf{Pr}[a] & A probability distribution over a discrete variable\\
	$X_i$ & An indicator random variable to measure the hazard value\\
	$X$ & A random variable comprising a sum of $l$ independent random variables, $X_i, |i| = l.$\\
    $\hat{z}$ & A hazard value taken from the random variable, $X$. Or $\hat{z}=X$\\
    $\mathbb{E}[*]$ & An expected value\\
	$\mu_{\hat{z}}$ & Expectation of the random variable $X$, used to represent the expected hazard rate\\
    $t$ & A time unit\\
    $z(t)$ & A general hazard rate of the system\\
    $\Delta t$ & A change in the time unit\\
    $F(t)$ & The cumulative failure distribution function\\
    $R(t)$ & The reliability function\\
    $f(t)$ & The failure density function or the probability distribution function\\
    $T$ & The random variable used to measure the probability of failure\\
    $\lambda$ & The arbitrary best constant hazard value of the manually tested software\\
    $K,\hat{K}$ & The scaling parameters of the Weibull distribution\\
    $m, \hat{m}$ & The exponent parameter value in the Weibull distribution\\
    $\delta$ & The band length or the interval measured from the expectation\\
    $e$ & An exponent function\\
    $\hat{R}(t)$ & The reliability of the SDP-based software that is calculated from $\hat{z}$\\
    $\mu_{\hat{R}}$ & The expected reliability of the SDP-based software that is calculated from $\hat{z}$\\
	$Y_i$ & An indicator random variable to measure the hazard value\\
	$Y$ & A random variable comprising a sum of $l$ independent random variables, $X_i, |i| = l.$\\
    $\hat{z}_Y$ & A hazard value taken from the random variable, $Y$. Or $\hat{z}_Y=Y$\\
	$\mu_{\hat{z}_Y}$ & Expectation of the random variable $Y$\\
	$\hat{R}_Y(t)$ & The reliability of a SDP-based software that is calculated from $\hat{z}_Y$\\
    $\mu_{\hat{R}_Y}$ & The expected reliability of the SDP-based software that is calculated from $\hat{z}_Y$\\     
    \hline
\end{tabular}
\end{table*}
\section{Introduction}
\label{Introduction}
Software Defect Prediction (SDP) is an active research area, attracted by many research practitioners over many years \cite{compton1990prediction, khoshgoftaar1990predicting, bhutamapuram2021project,Basili1996, Fenton1999, Challagulla2005, Gao2007, Elish2008, Catal2009a, Erturk2016, jin2021cross, Sharma2022Measures}. The SDP models are being built to make it easier for the software tester to work on the problematic code instead of searching for the defective code in the newly developed software project. These models are built based on machine learning (ML) techniques that use the defect data collected from the developed project(s). Typically, the defect data is collected for the modules\footnote{The literature on SDP treats any software program, function, class, or method as a module \cite{bhutamapuram2021project, Challagulla2005, Elish2008, zimmermann2009cross, Sharma2022Measures}.} that are present in the previously developed software project. The defect data for the software modules is represented using the well known metrics suits such as size metrics (such as lines of code (LoC)), complexity metrics \cite{mccabe1976complexity}, Halstead's volume metrics \cite{halstead1977elements}, etc. These metrics are treated as independent features to train the machine learning model. Later, using the trained SDP model, the defect-proneness of the newly developed software module is determined.

Several types of SDP have been introduced as a result of research advances, including with-in-project defect prediction (WPDP) \cite{bhutamapuram2021project}, cross-project defect prediction (CPDP) \cite{zimmermann2009cross, Sharma2022Measures} and heterogeneous defect prediction (HDP) \cite{nam2017heterogeneous}. The WPDP models are built for a single software, where the data of the previously developed versions of the software is used to train the model and the module information from the newly developed version of the same software is treated as the test data \cite{bhutamapuram2021project}. The models for CPDP are built on defect data, which is collected from multiple source projects. Here, the class labels for the modules in a currently developed project (also called the target project) are observed by utilising such trained CPDP models. The availability of the common metric space among all the source and target projects is the only criteria for training the CPDP model \cite{herbold2017comparative, Sharma2022Measures}. Whereas the HDP models provide solution when the common metric space between the source and target projects is unavailable \cite{nam2017heterogeneous}. However, each type of the SDP model has the common task of finding the defect-proneness of a newly developed software module.

Once a better prediction model is developed, any organisation may utilise its services. In general, there are chances that the developed prediction model may produce misclassified predictions on the target project, as any state-of-the-art prediction model classifies an unseen example into the respective class with a minimal error \cite{james2013introduction, shalev2014understanding}. Here, the misclassifications are either \textit{false positives} or \textit{false negatives} or both. The following few examples illustrate the plausible actions that will be taken by the testers, given the misclassified predictions. Assume that the newly developed module is clean, and the prediction model classifies it as defective. The tester then observes the \textit{false positive} result. In this case, the tester will check for the correctness of the module. On the contrary, if we assume the developed module is defective and the prediction model classifies it as clean, then the tester observes the \textit{false negative} result. In this case, the tester will miss the defective module. These unrecognised defects in such modules are also called as \textit{soft defects} or \textit{dormant defects}, and in general, they are discovered at the operational phase by triggering such defective modules; otherwise, they will remain dormant in the system \cite{lyu1996handbook}. These dormant defects are harmful to the software because, if they become active, they may lead to the failure of the entire system \cite{lyu1996handbook}.

In manual testing procedures, failures occur in software owing to either design defects or malicious logic, etc. \cite{lyu1996handbook, pressman2005software}. However, in a general software development life cycle (SDLC) with well-established testing procedures, the testers assess the developed modules to remove or modify the defective content \cite{pressman2005software}. After several iterations of testing, over a period of time, the quality of the system will be improved before delivering the product \cite{lyu1996handbook, pressman2005software}. But, in the case of using SDP models, the testers depend only on the predictions, and if the SDP outputs any false negatives, then, as discussed above, the defective modules will become dormant in the system (assuming integration testing, system testing, or acceptance testing does not prompt these defects). In such a case, achieving the quality of the software system is difficult.

To understand the amount of impact that the dormant defects have on the developed software system, in this work we provide a theoretical analysis based on the probabilistic tail bounds. In other words, in this work, through the probabilistic bounds, we provide insights on the feasibility of the developed binary defect prediction models in the real-world testing environments. The bounds we provide in this work are largely dependent on some assumptions (a detailed discussion of the stated assumptions is given in Sections \ref{Preliminaries} and \ref{Discussion}). The preliminary assumption is that the actual defect-proneness of the developed module is unknown. Later, we model the outcome of each predicted clean module as an independent random variable. At the time of developing the prediction model, we assumed similar distributions among the training, testing, and population data. As a result, we assign each misclassified defective module (or random variable) a similar probability value. Here, we use the value of \textit{False Omission Rate (FOR)} (which is observed on the test dataset) as the probability of occurrence of a single failure in the software as a result of the misclassification of a defective module. Since each random variable defines a single failure condition in the developed software, the sum of the random variables (which will be defined for the predicted clean modules) represents the total failures in that software (at any time, $t>0$). We treat the resultant failure instances in the software as ``hazard instances". Now, we observe the hazard rate of the same software, but it has been completely tested by a group of testers.

Using the Chernoff bound technique \cite{hellman1970probability, motwani1995randomized}, we provide bounds on the occurrence of fewer hazards in the software that uses SDP than the total hazards in the same software (at any point in time, $t>0$) that is tested by humans. Similarly, using the same technique, in terms of the reliability of the various hazard models, we provide bounds on the occurrence of greater reliability in the software that uses SDP than the reliability of the same software (at any point in time, $t>0$) that is tested by humans. We assume a manually-tested software follows the Weibull distribution of the \textit{hazard rate} (and its other supplements), in deriving various proofs. Assume, for example, that manually tested software follows a Weibull distribution of the hazard form ($Kt^m$). Then we derive a maximum bound of $e^{\frac{-(lp-Kt^m)^2}{2lp}}$ as the probability of occurrence of fewer hazards in the software (that uses SDP) than the total hazards in the same but manually tested software. Similarly, in terms of the reliability of a software that is modeled on the Weibull distribution of the hazard rate, we derive a maximum bound of $e^{-\big[e^{lp(e^{-t}-1)}-\frac{Kt^m}{m+1}\big]^2\frac{1}{2e^{lp(e^{-t}-1)}}}$ as the probability of occurrence of more reliability in the software (that uses SDP) than the reliability of the same but manually tested software.

In addition to that, we derive more bounds (in terms of both the hazard rate and reliability) by assuming each random variable (which is defined for the misclassified defective module) generates a Weibull distribution of hazard instances instead of a single failure in the system.

%The proofs (which are provided in Sections \ref{The Proofs}, \ref{Supplements}, and \ref{Discussion}) are majorly dependent on the parameters such as $K, \hat{K}, m, \hat{m}$, and the time $t$ (the details about these parameters are provided in Sections \ref{The Proofs}, \ref{Supplements}, and \ref{Discussion}). 
The derived bounds will become more sharp (or even these will become close to zero) when we observe more deviation between the random variable (assume it is represented for hazards in a software that is tested using an SDP model) and the real hazard rate of the same but manually tested software. In such a case, we conclude the ineffectiveness of the developed prediction model.
\subsection{Contribution} 
    This work is targeted at showing the feasibility of the SDP models in real-world testing environments. To the best of our knowledge, this work makes the following unique yet important contributions in the field of software defect prediction:
\begin{enumerate}
    \item We have provided tight lower bounds on the feasibility of the developed defect prediction model in real-world testing environments. The bounds are discussed in terms of both the hazard rate and the reliability of the software.
\end{enumerate}
To the best of our knowledge, this is the first work that provides a critique of the developed binary-prediction model (in our case, we used the software defect prediction model as a tool to provide the analysis) in real-world testing environments.\\
\textit{Paper Organisation}: In Section \ref{Preliminaries}, we provide some assumptions to derive the bounds. In addition, we determine the probability (from the false negative injections) of a failure of the software and its distribution. Section \ref{Hazardrate and Reliability} provides the information on the reliability indices such as hazard rate, reliability. In addition, in this section, we describe the revised bath-tub curve to demonstrate the software reliability. The proofs using the Chernoff bound are provided in Section \ref{The Proofs}. The proofs from the supplements of the Weibull distribution on the reliability indices are discussed in Section \ref{Supplements}. In Section \ref{Discussion} we provide a detailed analysis of the derived proofs. Section \ref{Conclusion} concludes the work and provides a potential future dimensions.
\section{Preliminaries}
\label{Preliminaries}
\begin{table*}
\centering
\caption{Confusion Table.}
\label{ConfusionTable}
\begin{tabular}{lc|cc|}
\cline{3-4}
 & \multicolumn{1}{l|}{} & \multicolumn{2}{c|}{\textbf{Predicted condition}} \\ \cline{2-4} 
\multicolumn{1}{l|}{} & \textbf{\begin{tabular}[c]{@{}c@{}}Total examples \\ (Defective + Clean)\end{tabular}} & \multicolumn{1}{c|}{\textbf{Defective}} & \multicolumn{1}{c|}{\textbf{Clean}} \\ \hline
\multicolumn{1}{|c|}{\multirow{2}{*}{\textbf{Actual condition}}} & \textbf{Defective} & \multicolumn{1}{c|}{TP (Hit)} & FN (Type-II error or miss) \\ \cline{2-4} 
\multicolumn{1}{|c|}{} & \textbf{Clean} & \multicolumn{1}{c|}{FP (Type-I error, false alarm)} & TN (Correct rejection) \\ \hline
\end{tabular}
\end{table*}
In Section \ref{Introduction}, we discussed various types of SDP, such as WPDP, CPDP, and HDP. However, these models are built for the binary classification (either \textit{defective} or \textit{clean}) of the newly developed software modules. Our objective is independent of the type of the SDP model. That is, all the proofs provided in Sections \ref{The Proofs}, \ref{Supplements}, and \ref{Discussion}, are applicable to any type of binary classification of defect prediction models because they each have the common goal of finding the defect proneness of the newly developed software module.

Now, defining the probability is the first step in providing the tightest bounds for the feasibility of the SDP models. In Section \ref{Deriving the Probability}, a detailed explanation is provided for determining the probability from the predictions (particularly using the confusion matrix). From the foundation step in Section \ref{Deriving the Probability}, in Section \ref{TheDistribution}, we define the binomial distribution that is derived from the sum of numerous Bernoulli trials. Section \ref{Real-time Applicability} provides details about the real-time applicability of the SDP models.
\subsection{Deriving the Probability}
\label{Deriving the Probability}
To provide tight bounds on the maximum possible reliability and minimum possible hazards in a software which uses the SDP models when compared with the manually tested software, we begin by counting the chances of failures in the software system, from the predictions of the SDP model. Of which, the major possible chance of failure is when the defective module is predicted as \textit{clean}. If the prediction model assigns a clean label to the defective module (assuming the newly developed software module contains defects in it), then the tester may miss such a module. Hence, there is a chance that the software fails.

Before modelling the failures in a software, in this section, we derive a probability value as the percentage of misclassification of defective modules over the total observed cleans. Before defining the probability and its other dependent derivations, we make the following assumptions:

Now, the following assumption sets the binary outcome for every newly developed software module:

\begin{assumption}
\label{Assumption1}
The SDP models operate on a binary classification; hence, the predictions for the software modules are either defective or clean (but not both).
\end{assumption}
That is, irrespective of the number of defects in a defective module, the prediction model will assign a binary value to such a software module. With this assumption, we infer that, even if the software module contains more than one defect, a single cause may lead to the failure of that software module.

\begin{assumption}
\label{Assumption2}
In real-world testing environments, an organisation employs the traditional SDP model, which is built on any batch learning approach.
\end{assumption}
In general, the prediction models are built based on either the fixed set of training data or the variable set (dynamic data) of training data. If the prediction model is trained using a fixed set of training data (in this case, the training data is drawn from the history of the project data), then we call the training process \textit{batch learning}. On the contrary, if the prediction model utilises a dynamic set of training data to train the model, then we call the learning process \textit{online learning}. In general, in the literature, the SDP models utilise the historical data of the released software projects in the training \cite{bhutamapuram2021project, zimmermann2009cross, nam2017heterogeneous}.

\begin{assumption}
\label{Assumption3}
At the time of developing the prediction model, the data distributions for the training set, test set, and population were similar.
\end{assumption}
This assumption will ensure that the prediction model may produce near-similar results on all the datasets. Without loss of generality, the statistical validity of the developed prediction models is provided based on this assumption. That is, in general, the SDP (WPDP, CPDP, or HDP) models work on the basis of similar source data distributions with the target project data. Consequently, these distributions reflect similar distributions with the population data \cite{Sharma2022Measures, jin2021cross, nam2017heterogeneous}.

In general, we observe from the confusion table \ref{ConfusionTable} that the predicted clean modules belong to either false negatives or true negatives. Now, the only possible instance where the software fails is when the defective module is predicted to be the clean module. Now, the following assumption ensures that there is a single failure for each misclassified defective module.

\begin{assumption}
\label{Assumption4}
Each defective module misclassification can result in a single failure in the software system.
\end{assumption}
From the predicted clean modules, from Assumption \ref{Assumption1}, the above assumption ensures that each misclassified defective instance (we call them ``dormant defects'') can cause a single failure in the software system. In general, in some cases, a single defect may cause multiple failures, and these failures are assumed to be in a Poisson process \cite{ross2014introduction}. But for a simple case, we assume that each defect may cause a single failure in the software. Hence, as a result, the total number of dormant defects (or hidden failures) in the software is equivalent to the total number of false negative instances.

Now, to measure the percentage of misclassified defective instances (later, this is used to measure the percentage of failure instances in newly developed software) from the predicted clean modules (that is, from the test dataset), we use an evaluation metric called \textit{False Omission Rate (FOR)}. The \textit{FOR} measures the percentage of the \textit{type}-II errors from the observed clean modules. From the confusion table \ref{ConfusionTable}, the \textit{FOR} is defined as the ratio of false negatives over the total number of negative calls. This is given below:
\begin{equation}
\label{FOR}
    \textit{FOR} = \frac{\text{False Negatives}}{\text{False Negatives+True Negatives}} = \frac{\text{False Negatives}}{\text{Negative Calls}}
\end{equation}

%The value of the measure \textit{FOR} varies according to the nature of the prediction model. That is, if the machine learning model utilises a fixed set of training instances then prediction model follows the batch learning. Note that, here, the test data is used to determine the performance of the developed prediction model. Similarly, if the machine learning model utilises a variable set of training instances (online machine learning), then the performance of the model on the variable set of test data will change at regular intervals. If an organisation utilises the services of the traditional SDP model (developed based on batch learning), then the value of the measure \textit{FOR} is treated as a constant value. The proofs provided in this work are based on the assumption that the SDP model utilises any batch learning approach.

Equation \ref{FOR} ensures that the percentage of failure instances in the target software project\footnote{In general, the SDP models are trained on the historical source project data (we use this augmented data to train the prediction model) and validated on the target project data.}.

Now, let us define the probability that the newly developed defective module falls into the clean class. Since from Assumption \ref{Assumption2}, we assume that an organisation utilises the batch learning model to predict the defect proneness of the newly developed software module (this assumption ensures some arbitrary constant FOR value), and also from Assumption \ref{Assumption3} we have that the distributions of the test set and the population set are similar (this assumption ensures near uniform predictions from the prediction model) then, in real-time scenario, any predicted clean module will become a wrong prediction with the probability of FOR. This is given as: 
\begin{equation}
\label{FOR-Probability}
    p = \textit{FOR} = \frac{FN}{FN+TN}
\end{equation}

Equation \ref{FOR-Probability} is valid in the real-time testing environments because, from Assumption \ref{Assumption2}, we know the fact that the testers use the developed prediction models (that is, these prediction models use the historical data in the training), and for every newly developed module, if these are predicted as clean, then each predicted clean module has a similar probability $p$ to be an originally defective module. Precisely, in real-time testing of the software modules, if these modules are tested and predicted as clean by the SDP model, then all these are assumed to be the wrong predictions, with each having a similar probability, $p$.

Now, using the probability, $p$ (from Equation \ref{FOR-Probability}), let us define the distribution for the set of predicted clean modules.
\subsection{Defining the Distribution}
\label{TheDistribution}
The assumptions \ref{Assumption1}-\ref{Assumption4} are essential in defining the probability distribution. By the end of this section, we will have the total number of failure incidents and the expected number of failure incidents in a software system, which are observed from the SDP model. Since we do not know the actual class labels for the newly developed software modules, we treat each prediction outcome of the newly developed software module as a random variable. Out of all the random variables, we are interested in the random variables that measure the misclassified defective modules in the software project. The following scenario formally depicts the failures in the software that uses the SDP model.

Assume $n$ new software modules are developed in the project, $P$, and each newly developed software module is independently tested with the SDP model to predict its class label. Let the set $S_n = \{M_1, M_2, \dots, M_n\}$ represent the set of newly developed $n$ software modules in the project, $P$ (whose actual class labels are unknown) utilise the SDP models to predict its class labels. The newly developed $n$ software modules are assumed to have been originally either defective or clean (but not both). Note that defective modules may contain more than one defect. But for a simple case, from Assumption \ref{Assumption1}, the defect count will be converted into a binary value. Now, using the SDP model, we observe the predictions for the modules in set $S_n$. Let us assume that $l$ software modules are predicted to be from the clean class out of $n$ software modules. Note that the set of predicted $l$ clean modules was originally either defective or clean. Now, to count the number of wrongly predicted defective modules from the $l$ modules, let us define an indicator random variable that takes 1 if the newly developed module $M_i, |i|=l$, is wrongly classified into the clean class and takes 0 if the newly developed module $M_i$ is correctly classified into the clean class. This is given as:
\begin{equation}
\label{IndicatorRV}
X_i = \begin{dcases*}
        1, & if the module $M_i$ is wrongly classified into clean\\
        0, & if the module $M_i$ is correctly classified into clean
      \end{dcases*}
\end{equation}

Here, we define the indicator random variable $X_i, i \in \{1,2, \cdots, l\}$ to count the number of failure instances in a software that uses the defect prediction model. %According to Assumption \ref{Assumption4}, we assume an occurrence of a single failure from each misclassified defective module. This makes an easy computation in counting the number of failure incidents in the software (that uses SDP model).

From the Assumption \ref{Assumption2}, since we set the fact that an organisation utilises the services of the traditional SDP model (that is developed using batch learning), and from Assumption \ref{Assumption3}, we ensure a similar probability for the predicted value of each newly developed module, then each predicted clean module going into the wrong class is assigned with the probability, $p$. This is represented as:
\begin{equation}
\label{Probability-X}
    \textbf{Pr}[X_i=1] = p, \text{ and, }  \textbf{Pr}[X_i=0]=1-p, \text{ for } 1 \leq i \leq l.
\end{equation}

where $X_i$ is an indicator random variable defined in Equation \ref{IndicatorRV}. Since each indicator random variable $X_i, i \in \{1,2, \cdots, l\}$ defined in Equation \ref{IndicatorRV} takes the value 1 with the probability $p$ (similarly, $X_i$ takes 0 with the probability 1-$p$), this becomes a Bernoulli trial. Note that the value $p$ = 0 indicates that the prediction result does not have the \textit{false negative} instances (on the test set), and hence, the SDP model is working accurately towards the defective modules. On the contrary, the value $p$ = 1 indicates that the prediction result does have the \textit{true negative} instances and, hence, the prediction model is not working accurately for the actual clean modules. Thus, all the predicted clean modules belong to the \textit{false negative} category. Now, the following assumption is essential to providing all the proofs (which are given in Sections \ref{The Proofs}, \ref{Supplements}, and \ref{Discussion}):

\begin{assumption}
\label{Assumption5}
The SDP model should exhibit at least one false negative and one true negative outcome for the target project.
\end{assumption}

This assumption ensures that the probability ($p$) value lies in the interval (0, 1). In Sections \ref{The Proofs}, \ref{Supplements}, and \ref{Discussion}, we provide all the proofs based on this assumption. However, in supporting this assumption, it is a fact that, according to Herbold et al. \cite{herbold2017comparative}, most SDP models do not work well on the target datasets. In addition, finding the best classifier that works consistently for the specific application remains a difficult task \cite{bishop2006pattern}. Hence, we believe this assumption holds true in real-world testing environments.

Now, using the group of Bernoulli random variables,  $X_i, i \in \{1,2, \cdots, l\}$, below we define the binomial distribution.
\subsubsection{The Binomial Distribution for the Failure Incidents:}
Because the SDP model independently and randomly takes each developed software module as input to find its defect proneness, each $X_i, i \in \{1,2, \cdots, l\}$, becomes a Bernoulli random variable. Our discussion below will focus on the random variable $X$, which is defined using the sum of the independent Bernoulli random variables with identical probabilities. Hence, the random variable $X$ has a binomial distribution. This is expressed as:
%Now, let a random variable $X$ that follows a Binomial distribution (with the parameters $l$ and $p$ is denoted as $X\sim B(l,p)$) is used to count the total number of failure (hazard) incidents in the software. This is represented as:
\begin{equation}
\label{SumofIndependentTrials}
    X = \sum_{i=1}^{l} X_i
\end{equation}
The random variable derived in Equation \ref{SumofIndependentTrials} is nothing but the hazard rate (at any time $t>0$, after deploying the software) in SDP-based software. Let $\hat{z}(t)$ is the hazard rate of the software that is tested using the SDP model, and then we have that:
\begin{equation}
\label{Hazard Rate-SDP-X}
    \hat{z}(t) = X
\end{equation}
Since, the Bernoulli random variables $X_1, X_2, \cdots, X_l$ are assumed to be independent then, as mentioned above, the sum of $l$ identical Bernoulli trials is said to be a binomial distribution. The mean of the binomial distribution or the expected number of hazard instances in the software is derived as follows (by using the linearity of expectation):
\begin{multline}
\label{Expectation}
    \mu_{\hat{z}} = \mathbb{E}[X] = E\Big[\sum_{i=1}^{l} X_i\Big] = \sum_{i=1}^{l} \mathbb{E}[X_i] =\\ \sum_{i=1}^{l} \big[ 1*\textbf{Pr}[X_i=1] + 0*\textbf{Pr}[X_i=0]\big] = \sum_{i=1}^{l} p = lp
\end{multline}

Equation \ref{Expectation} provides information about the expected number of failure (hazard) incidents in the software that is tested by using the SDP models, assuming each wrongly predicted defective module commits to a single failure (according to Assumption \ref{Assumption4}). Note that, throughout the paper, we use the terms failure rate and hazard rate interchangeably.

Note that, from Equation \ref{Hazard Rate-SDP-X}, in the real-time scenario, after the software is deployed (or released) to the clients, it may have $X$ (or, on average $\mu_{\hat{z}}$) defects (consequently, each defective instance generates a single failure) in the system. Now, we provide the tight bounds by comparing the random variable $X$ (which is a measure of the hazards in a software that is tested using SDP models) with the hazard rate of the same software that is tested by the group of testers. We provide the proofs using the tail-inequality technique known as the Chernoff bound. The expectation of the binomial distribution (from Equation \ref{Expectation}) is essential and sufficient to provide proofs using the Chernoff bound. A detailed list of proofs are provided in Sections \ref{The Proofs}, \ref{Supplements}, and \ref{Discussion}.

%The obtained tight probability bound value indicates that the SDP models are either feasible or impractical in the real-world testing environments. For example, larger probabilistic bound value indicates the successful working nature of the SDP models in the real-time testing environments. On the contrary, smaller probabilistic bound value indicates the impracticability of the SDP models in the real-world testing environments. The proofs and the detailed discussion on the proofs are provided in Sections \ref{The Proofs}, \ref{Supplements}, and \ref{Discussion}.
\subsection{Real-time Applicability}
\label{Real-time Applicability}
Assume a new software (with $n$ modules) is developed by an organisation with a fixed set of specifications intended to fulfil the collected requirements. After developing the software (assuming the developers follow any specific procedure to develop software), a group of testers is put on the developed software to eliminate the defects present in that system. The common strategy is to test the individual modules with a set of test cases. Assume that any module in the software is intended to perform according to the well-written specifications, and these specifications will become the test-cases for the developed software module \cite{lyu1996handbook}. Now, within the available amount of time (since any testing team must have time limits to accomplish testing the software), each module will be carefully tested by the group of testers to observe (and then remove) the defects. After the thorough revisions of the software, most defects that cause failure of the software will be removed from the system. However, it is evident that most software systems still fail due to many reasons, ranging from improper testing practises to unexpected external factors.

As discussed in Section \ref{Introduction}, instead of manual testing, the SDP models try to simplify the testing job and provide the defect-proneness of the developed software modules. Based on the predictions, testers conduct a code walk on the predicted defective modules to remove any damaged code. But, aforementioned, if the defective modules are predicted as clean then there is a chance that the software may fail. 

In the above two scenarios, the software systems may experience failures. Now, before finding the amount of deviation between the two quantities of failures (one is observed on the manually tested software and the other is a random variable), the following assumptions must hold true.

%In all the constructed proofs (presented in Sections \ref{The Proofs}, \ref{Supplements}, and \ref{Discussion}), we compare the hazard rate and reliability of the two similar software systems. Now, the following assumption ensures similar software for two kinds of testing procedures (that is, manual testing and testing using SDP models).
\begin{assumption}
\label{Assumption6}
The software is similar for both manual testing and testing using SDP models.
\end{assumption}
%In Sections \ref{The Proofs} and \ref{Supplements}, all the proofs for the feasibility of SDP models are derived based on this assumption. 
This is a primary and important assumption in conducting this research. Since we are interested in finding the feasibility of the SDP models, we provide the proofs for the deviation of the hazard rate of the software that is tested using the SDP models below from the hazard rate of the same but manually tested software. Also, we provide the proofs by using the deviation of the reliability of the software that is tested using the SDP models above from the reliability of the same but manually tested software.

Note that, according to Assumption \ref{Assumption6}, even though the software is similar in both the cases of manual testing and testing using SDP models, the number of modules and the failure incidents may differ. This is because manual testing follows a typical structural methodology to eliminate dead code, unreachable code, or some redundant code from newly developed software. As a result, after thorough testing, the majority of such code will be eliminated while keeping the original functionality intact, before releasing the product. This may result in a variation in the total number of modules (when compared with the initial number of developed modules) before releasing the product. Similarly, when testing the software using SDP models, the testers also test the predicted defective modules. For such modules, after thorough testing, the majority of unwanted code will be removed. As a result, the total number of modules in the same software may vary. Consequently, we may see differences in the total hazards between the two software systems, one of which is tested using the SDP model and the other by a group of testers.

Now, the following assumption ensures the presence of hazards in software that is tested using SDP models:

\begin{assumption}
\label{Assumption7}
For the incorrectly predicted defective modules, the integration test, system test, or acceptance test does not prompt the defects.
\end{assumption}
This is a rare but possible situation in real-world testing environments. Because testing the software (via manual testing) with an exhaustive set of test cases is impossible, the presence of hidden defects in the software system is unavoidable. Hence, end-users may experience failures after triggering such hidden defects. Similarly, for software that uses SDP models, the presence of defects is inevitable even after the completion of testing \cite{bhutamapuram2021project, lyu1996handbook}. Specifically, while the SDP models reduce the work load on the testers by avoiding unit testing, the testers may encounter defects during integration testing, system testing, or acceptance testing. However, according to Assumption \ref{Assumption7}, we will remain with the $X$ hazards (or, on an average, $\mu_{\hat{z}}$ hazards) in the software system after the software is released. 

In addition to the above assumption, the following assumption ensures the absence of hazards from the serviced defective modules:

\begin{assumption}
\label{Assumption8}
The repaired defective modules will not deteriorate.
\end{assumption}
With this assumption, we can only generate hazardous conditions from the predicted clean modules. Relaxing this assumption may result in a prediction of the likelihood of hazard occurrences from the serviced code. % (after prediction, the tester assists only the defective modules), and which is similar to estimating the hazard rates from the manually tested software.

Now, the following assumption ensures similar time intervals for estimating the hazards and reliability of both the software systems (that is, the software that is tested using SDP and the other that is manually tested).
\begin{assumption}
\label{Assumption9}
Both the software systems (that is, the software that is tested using SDP and the other, manually tested software) will be released at time 0, and their hazards and reliability are measured in the time interval [0, $t$].
\end{assumption}
Now, in order to validate the situation where the binomial distribution is applicable in the real-time scenario, according to Assumption \ref{Assumption9}, all the newly developed modules are tested using the SDP model before time 0. That is, because the SDP model predicts whether the newly developed module will be defective or clean, we assume that the testers will service the predicted defective modules before releasing the product. Later, for the software system that uses the SDP models, the hazard rate and its reliability are measured in the time interval [0, $t$]. This enables easy computations for the reliability indices on the software that uses the SDP model. Similarly, according to Assumption \ref{Assumption9}, for a manually tested software, the testers conduct a complete system test before releasing the product. After a thorough system test, the software will be deployed at time 0, and thereafter the hazard rate and its reliability will be measured in the time interval [0, $t$]. Similar to testing the software using SDP models, this assumption enables easy computations for the reliability indices in the case of manually tested software.
\section{The Hazard Rate and Reliability of Software}
\label{Hazardrate and Reliability}
In the previous section, we estimated the expected number of hazards that may occur in the software that uses the SDP model, assuming the software is deployed. Later, these hazards (which is a random variable) are used to define the reliability of the software system. In order to calculate the reliability from the hazard rate, in Section \ref{Software Reliability Indices} we provide the definition and the relation between such reliability indices. The definition of reliability indices is common for both manually tested software and software that is tested using the SDP models.

In general, the reliability of a hardware system is represented in the bath-tub curve \cite{lyu1996handbook}. The software reliability, however, does not show the same characteristics as the hardware \cite{hartz1997introduction}. Hence, in Section \ref{The BathTub Curve}, we discuss the details of the revised bath-tub curve for the software reliability. In Section \ref{The Weibull Distribution}, we discuss a widely used hazard model called the Weibull distribution. Also we derive the reliability of a software from the the Weibull distribution of the hazard model.

%To estimate the probability that the SDP model achieves the reliability of the software, in this work, we make a comparison with the reliability of the software which involves manual testing. In practice, there are various methods available to estimate the reliability of the software.  is established  analyzing the historical failure data. For this various software reliability growth models have been proposed in the literature \cite{aggarwal2022optimization} \cite{huang2022software}.%For this, we first provide information about the hazard rate and the reliability of the software system. Also, we compute the hazard rate and the reliability of the software system which uses the SDP model. Later, using Chernoff bound, we calculate the maximum and minimum possibility of the reliability of the software which uses SDP model in the real-time scenario.%Since, we are considering the false negative instances (which are later assumed to be the failure cases in the software system) to define the random variable \textit{X}, we find the tight bounds that the random variable \textit{X} exceeding the number of failure instances in a ideal software system. The number of failure instances in a ideal software system is observed from the reliability index called hazard rate $z(t)$. Section \ref{Software Reliability Indices} provides information about the reliability indices, which are used to define the number of failure instances in a ideal software system. Without loss of generality, the number of failure instances are then used to define the reliability of the software.
\subsection{Software Reliability Indices}
\label{Software Reliability Indices}
This section defines and explains the relationship between reliability indices such as hazard rate and reliability. We adapted the fundamental definitions of the reliability indices from the work of Lyu in  \cite{lyu1996handbook}.

\subsubsection{Hazard Rate}
The hazard rate is the instantaneous rate of failure of the system at time $t$, given that the system survives up to the time $t$ \cite{lyu1996handbook, hartz1997introduction}. This is expressed as:
\begin{equation}
\label{Eq: Hazard Rate}
    z(t) = \lim_{\Delta t \to 0} \frac{F(t+\Delta t) - F(t)}{\Delta t R(t)} = \frac{f(t)}{R(t)}
\end{equation}
Where, $f(t)$ and $F(t)$ are the probability density function (pdf) (or the failure density function) and cumulative distribution function (cdf), respectively. $R(t)$ is the reliability function, used to measure the probability of success at time $t$. In this work, we assume the availability of the hazard rate of manually tested software. Hence, functions such as $f(t)$ and $F(t)$ are not required to estimate.
\subsubsection{Reliability}
Before defining the reliability, we assume the random variable of interest is the time to failure of the software, $T$. Now, the software reliability is defined as the probability of failure-free software operation for a specified period of time in a specified environment \cite{lyu1996handbook, hartz1997introduction}. Formally, this is expressed as:
\begin{equation}
\label{The Reliability Definition}
    R(t) = Pr[T>t] = 1 - F(t) = \int_t^\infty f(x) dx
\end{equation}
\subsubsection{Relation between Hazard Rate and Reliability}
After few substitutions and derivations from Equations \ref{Eq: Hazard Rate} and \ref{The Reliability Definition}, we obtain the relation between the hazard rate and the reliability as \cite{lyu1996handbook, hartz1997introduction}:
\begin{equation}
\label{Reliability from Hazard Rate}
    R(t) = e^{-\int_0^t z(x) dx}
\end{equation}
Equation \ref{Reliability from Hazard Rate} helps in deriving the reliability values for the various hazard models. Estimating the reliability from the hazard rate is similar for the software that uses the SDP model and the manually tested software.
\subsection{The Bath-Tub Curve for Software Reliability}
\label{The BathTub Curve}
In general, any software system is often serviced by a group of testers for every occurrence of defects or those deemed to enhance the functionalities of that software. Hence, with every modification in the system by removing the observed defects, over a period of time, the failure incidents in the software will be reduced to their minimal value. As a result, the software does not wear out over a period of time but may experience failures based on improper or misunderstood specifications, input data errors, algorithmic errors, programme logic errors, etc. \cite{lyu1996handbook}. The typical hazard curves that may likely occur in any software, in a given period of time, are represented in Figure \ref{RevisedBathTub}.

\begin{figure}
\centering
\includegraphics[width=9cm]{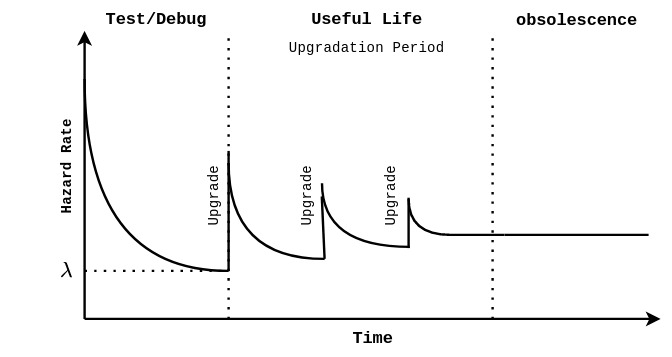}
\caption{A typical revised bath-tub curve for the software reliability \cite{hartz1997introduction}.}
\label{RevisedBathTub}
\end{figure}

The hazard rate modelling in the software is different when compared with the hardware. In \cite{keene1994comparing}, Keene has provided a partial list of the distinct reasons for observing the variation in the hazard rates of the software system. In \cite{hartz1997introduction}, Mary Hartz et al. has presented a typical revised bath-tub curve, and that is observed in Figure \ref{RevisedBathTub}. From Figure \ref{RevisedBathTub} it is observed that, the hazard rate of the software is divided into three regions, such as test/debug, upgradation, and stable/obsolescence region. The region I in Figure \ref{RevisedBathTub} describes that, at the time after deployment, the software may experience more failures upon triggering the defects that are present in the written software programmes (modules). As a result, the software may experience a higher hazard rate in this region. The quality control mechanisms and initial product testing strategies help to eliminate the observed defects in the system. In this region, the hazard rate tends to decrease as a function of time.

The region II in Figure \ref{RevisedBathTub} represents the upgradation period of the software. In this region, the software system experiences more upgrades due to either removing the defective functionalities or enhancing the functionalities in that software \cite{hartz1997introduction}. As a result, the hazard rate fluctuates at every occurrence of a new upgrade in the software system. Sometimes, it is likely to observe cascading defects (consecutively, cascading failures) from the bug fixes in the software system, or sometimes every upgrade in the software will lead to a reduction in the number of failure incidents in the system.

The region III in Figure \ref{RevisedBathTub} represents the stable nature of the software as it does not invite any new upgradations. Or, in this region, the software is often treated as obsolescence, indicating no motivation for providing feature updates in the system. However, if the software is in use, the end-user may experience fewer failure incidents in this time period \cite{lyu1996handbook, hartz1997introduction}. Hence, the hazard rate of the system in this region is treated as an arbitrary constant value as opposed to the hazard rate in the hardware systems \cite{lyu1996handbook}. Since the software becomes more complex with each upgrade, possibly more defects will be eliminated from the system. As a result, the software system will be stable for a longer period of time.

Due to the presence of the variation in the hazard rates in a software over a period of time (from Figure \ref{RevisedBathTub}), we provide the proofs based on assuming the hazard model follows the Weibull distribution. The Weibull distribution provides different functions of the hazard rates over a period of time. The literature exhibits estimating many complex hazard functions and the reliability models for the software \cite{lyu1996handbook, hartz1997introduction, huang2022software, gokhale1996unification}. However, for easy computation of the bounds, we are utilising the Weibull distribution forms of the hazards in software. The Weibull model of the hazard rate in the software is described in Section \ref{The Weibull Distribution}.
\subsection{The Weibull Distribution}
\label{The Weibull Distribution}
In many cases, the hazard rate of the software in a given time period does not follow a specific function form \cite{lyu1996handbook, pressman2005software}. In order to approximate various hazard curves, we use a hazard rate of the form called the Weibull distribution.
\begin{definition}
\label{Definition: Weibull Hazard}
For any software system, the Weibull distribution for the hazards function ($z(t)$) is given as \cite{lyu1996handbook}:
\begin{equation}
\label{Weibull-Hazard}
    z(t) = Kt^m, \text{ for some } K>0, m>-1, \text{ and time } t>0
\end{equation}
\end{definition}
The Weibull distribution is a time dependent function and, by appropriate choice of the two parameters $K$ and $m$ in the Weibull distribution resulting in approximating a wide range of specific hazard rates. As an example, for the fixed values of $m$, a change in the parameter $K$ merely results in a change in the vertical amplitude of the hazard rate. For a simple case, substituting the values of $m$ to 0 and $m$ to 1 in the Weibull distribution, we get the \textit{constant hazard rate} and \textit{linearly-increasing hazard rate}, respectively. In Section \ref{Supplements}, we provide the proofs for various supplements to the Weibull distribution of the hazard rate that are likely to occur in the software. These supplements may enable us to substitute the most relevant hazard form in the constructed proofs.

Now, from the Weibull distribution for the hazard rate (from Definition \ref{Definition: Weibull Hazard}), the following lemma defines the reliability of the software:
\begin{lemma}
\label{Lemma-Reliability-Wibull}
For the Weibull hazard model $z(t) = Kt^m, \text{ for some } K>0, m>-1 \text{ and time } t>0$, its reliability is:
\begin{align}
\label{Weibull-Reliability}
    R(t) = e^{\frac{-Kt^{(m+1)}}{m+1}}
\end{align}
\end{lemma}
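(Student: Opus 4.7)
The plan is to invoke the general relation between reliability and hazard rate that was already established in Equation~\ref{Reliability from Hazard Rate}, namely $R(t) = e^{-\int_0^t z(x)\,dx}$, and then substitute the specific Weibull form $z(x) = Kx^m$ from Definition~\ref{Definition: Weibull Hazard} into the exponent. The whole lemma reduces to evaluating a single elementary power-rule integral on $[0,t]$.

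First I would write $R(t) = \exp\!\bigl(-\int_0^t K x^m\,dx\bigr)$, pulling the constant $K$ outside the integral. Next I would evaluate the antiderivative of $x^m$ as $x^{m+1}/(m+1)$; here the condition $m > -1$ from Definition~\ref{Definition: Weibull Hazard} is essential because it guarantees that $m+1 \neq 0$ (so the antiderivative is not the logarithm) and, moreover, that $x^{m+1} \to 0$ as $x \to 0^+$ so the improper lower limit is integrable even when $-1 < m < 0$. Evaluating between $0$ and $t$ yields $\int_0^t K x^m\,dx = \tfrac{K t^{m+1}}{m+1}$, and plugging back into the exponent gives exactly $R(t) = e^{-K t^{m+1}/(m+1)}$, matching Equation~\ref{Weibull-Reliability}.

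There is no real obstacle; the only subtlety worth flagging in the writeup is the case $-1 < m < 0$, where $z(x) = K x^m$ blows up at $x = 0$ and the integral is improper. I would briefly justify convergence there by noting that $\int_0^t x^m\,dx$ converges for $m > -1$, so the antiderivative evaluated at the lower limit vanishes and the computation is unaffected. With that remark the proof is complete in essentially two lines.
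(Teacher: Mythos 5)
Your proposal is correct and follows exactly the paper's own route: substitute the Weibull form $z(x)=Kx^m$ into Equation~\ref{Reliability from Hazard Rate} and evaluate $\int_0^t Kx^m\,dx = \frac{Kt^{m+1}}{m+1}$. Your added remark on convergence of the improper integral when $-1<m<0$ is a welcome detail the paper leaves implicit, but it does not change the argument.
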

\begin{proof}
Since we assume that the hazards in a software follows Weibull distribution, substituting $z(t)$ in $R(t)$ (that is in Equation \ref{Reliability from Hazard Rate}) yields:
\begin{align}
\label{Reliability-Weibull-Inintial}
    R(t) = e^{-\int_0^t Kx^m \text{ } dx}
\end{align}

Now, after simplifying Equation \ref{Reliability-Weibull-Inintial}, we get the desired form of the reliability.
\end{proof}
Here, we use $z(t)$ and $R(t)$ to represent the hazard rate and reliability of the manually tested software. It should be noted that the functions $z(t)$ and $R(t)$ will become some arbitrary real-values for any estimated value of the parameters $K$ and $m$ and at any time $t$. 
\section{The Proofs}
\label{The Proofs}
In Section \ref{Bound on the Hazard Rate}, we provide a proof for the possibility of achieving minimum hazards in a software that uses SDP than the hazards in a manually tested software. A tight lower bound is derived in the proof. Similarly, in Section \ref{Bound on the Reliability}, we provide a proof for the possibility of achieving maximum reliability by a software that uses SDP, when compared with the reliability of a manually tested software. For this, a tight upper bound is derived in the proof. In order to provide a tight upper bound, eventually we ended up showing the lower bound of some other form in the proof.
\subsection{A Tight Lower Bound in terms of the Hazard Rate of a Software}
\label{Bound on the Hazard Rate}
In Section \ref{TheDistribution}, we derived the number of hazard (failure) instances in the software that is tested by using the SDP model using a random variable (that is, $\hat{z}(t) = X$). Now, the following theorem defines the deviation of a random variable, $X$ below the hazard rate of a manually tested software, $Kt^m$ (in fact, far below from the expected hazard rate, $\mu_{\hat{z}}$). 
\begin{theorem}
\label{Theorem-Weibull-Hazard}
Let $X_1, X_2, \dots, X_l$ be the independent Bernoulli trials such that for, $1\leq i \leq l$, $Pr[X_i = 1] = p$, where $0 < p < 1$. Then for X = $\sum_{i=1}^{l} X_i, \mu_{\hat{z}} = \mathbb{E}[X] = \sum_{i=1}^{l} p = lp$, and $\exists$ parameters $K>0, m>-1$, time $t>0$, and for the Weibull hazard rate, $z(t) = Kt^m$:
\begin{align}
    Pr[X < Kt^m] < e^{\frac{-(\mu_{\hat{z}}-Kt^m)^2}{2\mu_{\hat{z}}}}
\end{align}
\end{theorem}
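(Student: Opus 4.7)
The plan is to reduce the statement to a single invocation of the multiplicative Chernoff bound on the lower tail of a sum of independent Bernoulli trials. By Equations \ref{SumofIndependentTrials} and \ref{Expectation}, $X$ is binomial with mean $\mu_{\hat{z}} = lp$, so the classical lower-tail Chernoff inequality
\[
Pr\bigl[X < (1-\delta)\mu_{\hat{z}}\bigr] < e^{-\mu_{\hat{z}}\delta^2/2}, \qquad \delta \in (0,1),
\]
is directly applicable and can be cited to \cite{motwani1995randomized}. Its derivation goes through Markov's inequality applied to $e^{-sX}$ with $s>0$, the pointwise bound $1-p+pe^{-s}\le e^{p(e^{-s}-1)}$, independence to factor the moment generating function, optimisation in $s$, and the standard simplification $(1-\delta)^{1-\delta}\ge e^{-\delta+\delta^2/2}$. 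I would cite rather than reproduce this chain.

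The only substantive step is to re-parameterise the manually tested hazard level $Kt^m$ as a multiplicative deviation from $\mu_{\hat{z}}$. First I would set $Kt^m = (1-\delta)\mu_{\hat{z}}$ and solve for $\delta = (\mu_{\hat{z}}-Kt^m)/\mu_{\hat{z}}$. Substituting this choice of $\delta$ into the Chernoff bound, the two $\mu_{\hat{z}}$ factors in $\mu_{\hat{z}}\delta^2/2$ collapse cleanly to yield $e^{-(\mu_{\hat{z}}-Kt^m)^2/(2\mu_{\hat{z}})}$, which is precisely the claimed form.

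The main point requiring care — and the main obstacle to stating the theorem cleanly — is ensuring that the induced $\delta$ lies strictly in the admissible range $(0,1)$, outside of which the quoted Chernoff form is not valid. This imposes the parameter constraint $0 < Kt^m < \mu_{\hat{z}} = lp$. Assumption \ref{Assumption5} together with $l\ge 1$ already guarantees $\mu_{\hat{z}}>0$, but the upper condition $Kt^m < lp$ is a genuine hypothesis on the interplay between the Weibull parameters $(K,m)$, the elapsed time $t$, and the prediction statistics $(l,p)$. I would state this condition explicitly as the regime in which the bound is meaningful; in the complementary regime $Kt^m \ge \mu_{\hat{z}}$ the event $\{X<Kt^m\}$ is no longer a lower-tail deviation below the mean, and the conclusion would either become vacuous or require a separate argument.
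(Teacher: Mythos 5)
Your proposal is correct and follows essentially the same route as the paper: invoke the lower-tail multiplicative Chernoff bound, set $(1-\delta)\mu_{\hat{z}} = Kt^m$ so that $\delta = 1 - Kt^m/\mu_{\hat{z}}$, and substitute to obtain the stated exponent. Your explicit flagging of the admissibility condition $0 < Kt^m < \mu_{\hat{z}}$ is the same restriction the paper imposes informally when it assumes $Kt^m$ lies below the expectation, so nothing substantive differs.
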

\begin{proof}
We know that for some $0<\delta \leq 1$, and $\mu$,  using Chernoff bound, the lower tail bound for the sum of independent Bernoulli trials, $X$, that deviates far from the expectation $\mu$ is \cite{chernoff1952measure}:
\begin{align}
\label{LowerChernoff}
    Pr[X < (1-\delta)\mu] < e^{\frac{-\mu\delta^2}{2}}
\end{align}

Here, the value $(1-\delta)\mu$ represents the deviated value from the expectation $\mu$, with the band length of $\delta$.\\ %, this is depicted in Figure \ref{Image:LowerBound}. 
\iffalse
\begin{figure}
\centering
\includegraphics[height = 5cm]{LowerBound.jpg}
\caption{Deviation of the random variable $X$ below $Kt^m$.}
\label{Image:LowerBound}
\end{figure}
\fi
Now, we wish to obtain a tight lower bound that the random variable, $X$, deviates far below from the hazard rate of a manually tested software $Kt^m$. Here, for some $K>0, t>0$, and $m>-1$, the value $Kt^m$ is assumed to be below the expectation, $\mu_{\hat{z}}$, in a given time period $[0,t]$. That is, $Kt^m$ is the left marginal arbitrary constant from the expectation, $\mu_{\hat{z}}$. By equating the value of $(1-\delta)\mu$ with $z(t)$ in Equation \ref{Weibull-Hazard}, then we have:
\begin{align*}
    (1-\delta)\mu = Kt^m
\end{align*}
\begin{align}
\label{Weibull-Hazard-Delta}
    \Rightarrow \delta = 1 - \frac{Kt^m}{\mu}
\end{align}
Since, we are finding the tight lower bound for the deviation of a random variable ($X$) from the hazard rate of a manually tested software, in the above equation, we will replace $\mu$ with the expected hazard rate $\mu_{\hat{z}}$ of the software that is tested by using the SDP model. Now, we know from Equation \ref{Expectation}, the expected number of failures (which are assumed to be observed as a result of dormant defective modules) in a software which uses SDP model:
\begin{align*}
    \mu_{\hat{z}} = \mathbb{E}[X] = \sum_{i=1}^{l} \mathbb{E}[X_i] = \sum_{i=1}^{l} p = lp
\end{align*}
Now, substituting the value of $\delta$ (from Equation \ref{Weibull-Hazard-Delta}) and $\mu_{\hat{z}}$ (from Equation \ref{Expectation}) in Equation \ref{LowerChernoff} results in the tight lower bound for the deviation of a random variable far from the hazard rate of a manually tested software. This is shown below:
\begin{align}
    Pr[X < Kt^m] < e^{\frac{-\mu_{\hat{z}}\big[1-\frac{Kt^m}{\mu_{\hat{z}}}\big]^2}{2}}
\end{align}
A few steps of simplification will ensure the proof.
\end{proof}
The Theorem \ref{Theorem-Weibull-Hazard} provides evidence that the probability of occurrence of fewer hazards in SDP-based software is lower than the probability of occurrence of total hazards in the same software tested by a human and is bound by the maximum value of $e^{\frac{-(\mu_{\hat{z}}-Kt^m)^2}{2\mu_{\hat{z}}}}$. If we observe, the term in the exponent is negative. Hence, a large deviation between the terms such as $\mu_{\hat{z}}$, and $Kt^m$, in the numerator of the exponent will lead to more sharp bounds. In such a way, the term $e^{\frac{-(\mu_{\hat{z}}-Kt^m)^2}{2\mu_{\hat{z}}}}$ approaches 1 if the developed prediction model works equivalent to the quality of the testers. Similarly, the term $e^{\frac{-(\mu_{\hat{z}}-Kt^m)^2}{2\mu_{\hat{z}}}}$ approaches 0 if the developed prediction model work badly when compared with the quality of the testers.
\subsection{A Tight Upper Bound in terms of the Reliability of a Software}
\label{Bound on the Reliability}
In this section, we provide a lemma that calculates the reliability from the hazard rate of software that is tested using the SDP model. Here, as discussed above, the hazard rate of a software that is tested using SDP models is assumed to be a random variable, $X$.
\begin{lemma}
\label{Lemma-Reliability-X}
Let $X_1, X_2, \dots, X_l$ be the independent Bernoulli trials, then for the hazards in the software system, $\hat{z}(t)$ = X = $\sum_{i=1}^{l} X_i$, its reliability is:
\begin{align}
\label{Reliability-X}
    \hat{R}(t) = e^{-Xt}
\end{align}
\end{lemma}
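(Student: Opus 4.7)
The plan is to invoke directly the general hazard-to-reliability relation in Equation \ref{Reliability from Hazard Rate}, namely $R(t) = e^{-\int_0^t z(x)\,dx}$, which the paper establishes as a universal identity applicable to both manually tested software and SDP-based software. The only task is to identify the right integrand in the SDP case and evaluate the integral.

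First I would observe that the hazard rate $\hat{z}(t) = X = \sum_{i=1}^{l} X_i$, as constructed in Section \ref{TheDistribution} and Equation \ref{Hazard Rate-SDP-X}, does not actually depend on the time variable $t$. Each $X_i$ is an indicator determined entirely by whether module $M_i$ was misclassified by the SDP model, which is fixed once the predictions are made before deployment (consistent with Assumption \ref{Assumption9}, which places all testing before time $0$). Consequently, from the point of view of the integral in $t$, the quantity $X$ behaves as a constant (random) scalar.

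Next I would substitute $\hat{z}(x) = X$ into Equation \ref{Reliability from Hazard Rate} and pull the constant out of the integral, obtaining $\int_0^t X\,dx = Xt$, so that $\hat{R}(t) = e^{-Xt}$, which is the claimed form. One minor subtlety worth a sentence in the write-up is to justify treating $X$ as deterministic inside the integral: this is fine because the integration is over the time variable $x$, and conditional on the realization of the indicators $X_1,\dots,X_l$, the hazard profile is the constant function $x \mapsto X$; the randomness of $X$ is preserved in the exponent, making $\hat{R}(t)$ itself a random variable whose expectation $\mu_{\hat{R}}$ (listed in the notation table) can subsequently be analyzed.

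There is no real obstacle here; the result is essentially a direct specialization of Equation \ref{Reliability from Hazard Rate} to a time-constant hazard, exactly analogous to how Lemma \ref{Lemma-Reliability-Wibull} specializes the same identity to the Weibull form $Kt^m$. The only care needed is to make explicit that $X$ is $t$-independent, so that readers do not mistake $X$ for a stochastic process indexed by time when they encounter the integral.
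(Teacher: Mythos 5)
Your proposal is correct and follows essentially the same route as the paper: substitute the constant (in $t$) hazard $\hat{z}(x)=X$ into Equation \ref{Reliability from Hazard Rate} and evaluate $\int_0^t X\,dx = Xt$. The only difference is that you make explicit the (harmless, and worthwhile) observation that $X$ is time-independent conditional on the realized predictions, which the paper leaves implicit.
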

\begin{proof}
We have that the hazards in a software (that is tested using SDP model) is a random variable $X$. Then using Equation \ref{Reliability from Hazard Rate}:
\begin{align}
\label{Reliability-X-Initial}
    \hat{R}(t) = e^{-\int_0^t X dx}
\end{align}
Here, we use $\hat{R}(t)$, to represent the reliability of the software that is tested using the SDP model. Now, simplifying Equation \ref{Reliability-X-Initial} will result in the accomplishment of the proof.
\end{proof}
Now, to find the tight upper bound for the deviation of a random variable, $e^{-Xt}$, far from the reliability of a manually tested software, $e^{\frac{-Kt^{m+1}}{m+1}}$, we require to compute the expected reliability of a SDP-based software. From Lemma \ref{Lemma-Reliability-X}, we know the reliability of a software that is tested by using the SDP model. Now the expected reliability ($\mu_{\hat{R}}$ or $\mathbb{E}[\hat{R}(t)]$) is derived as:
\begin{equation}
\mathbb{E}[\hat{R}(t)] = \mu_{\hat{R}} = \mathbb{E}[e^{-Xt}] = \mathbb{E}\Big[e^{-t\sum_{i=1}^{l} X_i}\Big]
\end{equation}
Since the random variables ($X_i$s) are assumed to be independent, the sum of the terms in the exponent will become the product of the exponential terms. This is given as:
\begin{equation}
\mathbb{E}\Big[e^{-t\sum_{i=1}^{l} X_i}\Big] = \prod_{i=1}^{l} \mathbb{E}\big[e^{-tX_i}\big]
\end{equation}
Here, the random variable, $e^{-tX_i}$ assumes a value of $e^{-t}$ with probability $p$ and the value 1 with probability $1-p$. Now, using these values, we have the following from the above equation:
\begin{equation}
\label{Expected Reliability-SDP-1}
\prod_{i=1}^{l} \mathbb{E}\big[e^{-tX_i}\big] = \prod_{i=1}^{l} \big[pe^{-t}+1-p\big] = \prod_{i=1}^{l} \big[p(e^{-t}-1)+1\big]
\end{equation}
We know that, $1+x < e^x$. Now, using this inequality with $x = p(e^{-t}-1)$, we rewrite Equation \ref{Expected Reliability-SDP-1} to obtain the expected reliability:
\begin{equation}
\label{Expected Reliability-SDP}
\mathbb{E}[\hat{R}(t)] = \mu_{\hat{R}} = \prod_{i=1}^{l} \big[p(e^{-t}-1)+1\big] < \prod_{i=1}^{l} e^{p(e^{-t}-1)} = e^{lp(e^{-t}-1)}
\end{equation}
Now, by using the lemmas \ref{Lemma-Reliability-Wibull} and \ref{Lemma-Reliability-X}, the following theorem defines the deviation of a random variable $e^{-Xt}$ above its expectation $\mu_{\hat{R}}$. 
\begin{theorem}
\label{Theorem-Weibull-Reliability}
Let $X_1, X_2, \dots, X_l$ be the independent Bernoulli trials such that for, $1\leq i \leq l$, $Pr[X_i = 1] = p$, where $0 < p < 1$. Then for X = $\sum_{i=1}^{l} X_i, \mu_{\hat{R}} < e^{lp(e^{-t}-1)}, \exists$ parameters $K>0, m>-1$, time $t>0$, and for the Weibull Reliability function, $e^{\frac{-Kt^{m+1}}{m+1}}: $
\begin{align}
    Pr\Big[e^{-Xt} > e^{\frac{-Kt^{m+1}}{m+1}}\Big] < e^{-\big[e^{lp(e^{-t}-1)}-\frac{Kt^m}{m+1}\big]^2\frac{1}{2e^{lp(e^{-t}-1)}}}
\end{align}
\end{theorem}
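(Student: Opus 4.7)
The plan is to mirror the blueprint of Theorem \ref{Theorem-Weibull-Hazard}: invoke the Chernoff lower-tail template $Pr[Y < (1-\delta)\mu] < e^{-\mu\delta^{2}/2}$ after recasting the event so that its left-hand side matches. First I would translate the event $\{e^{-Xt} > e^{-Kt^{m+1}/(m+1)}\}$ into an equivalent lower-tail event by taking logarithms of both sides and dividing through by $t>0$: it becomes $\{X < Kt^{m}/(m+1)\}$. This is the ``lower bound of some other form'' flagged in the preamble of Section \ref{Bound on the Reliability}---the upper-tail probability on the reliability is controlled by bounding the equivalent lower-tail probability expressed through the threshold $Kt^{m}/(m+1)$.

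Next I would assemble the two ingredients the template needs. Lemma \ref{Lemma-Reliability-X} gives $\hat{R}(t)=e^{-Xt}$, and the computation culminating in Equation \ref{Expected Reliability-SDP} already supplies the upper bound $\mu_{\hat{R}} = \mathbb{E}[e^{-Xt}] < e^{lp(e^{-t}-1)}$, obtained by factoring the expectation across the independent $X_{i}$'s, writing $\mathbb{E}[e^{-tX_{i}}] = 1+p(e^{-t}-1)$, and applying $1+x<e^{x}$ with $x=p(e^{-t}-1)$. Lemma \ref{Lemma-Reliability-Wibull} supplies the manually-tested target reliability $R(t) = e^{-Kt^{m+1}/(m+1)}$ against which $\hat{R}(t)$ is being compared.

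I would then identify the template parameters by setting $\mu = \mu_{\hat{R}}$ and $(1-\delta)\mu_{\hat{R}} = Kt^{m}/(m+1)$, solving to obtain $\delta = 1 - Kt^{m}/[(m+1)\mu_{\hat{R}}]$, and substituting into $e^{-\mu_{\hat{R}}\delta^{2}/2}$. Expanding the square collapses the exponent to the clean form $-[\mu_{\hat{R}} - Kt^{m}/(m+1)]^{2}/(2\mu_{\hat{R}})$. A short monotonicity step then replaces $\mu_{\hat{R}}$ by its explicit upper bound $e^{lp(e^{-t}-1)}$, producing the exponent stated in the theorem.

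The hard part will be the substitution step above: justifying that $\mu_{\hat{R}}$ (rather than $\mu_{\hat{z}}=lp$) is the expectation to feed into the Chernoff template, given that the canonical inequality is stated for the sum of Bernoullis $X$ itself. This requires treating $\hat{R}(t)=e^{-Xt}$ as the random object of interest and carrying the exponential-moment machinery underlying the Chernoff proof across that transformation, so that the bound on the lower-tail event on $X$ becomes expressible in terms of the expected reliability rather than the expected hazard count. A secondary subtlety concerns the final replacement of $\mu_{\hat{R}}$ by $e^{lp(e^{-t}-1)}$: since the map $\mu \mapsto (\mu - Kt^{m}/(m+1))^{2}/(2\mu)$ is not monotone in $\mu$ across its full range, the direction of the inequality under this replacement has to be verified within the parameter regime to which the theorem is meant to apply.
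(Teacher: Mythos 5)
Your proposal follows essentially the same route as the paper's own proof: rewrite the event as $Pr[X < Kt^{m}/(m+1)]$, set $(1-\delta)\mu = Kt^{m}/(m+1)$, feed the resulting $\delta$ into the Chernoff lower-tail template with $\mu$ replaced by $\mu_{\hat{R}}$, and finally substitute the bound $e^{lp(e^{-t}-1)}$ for $\mu_{\hat{R}}$. The two difficulties you flag at the end --- that the Chernoff template is only licensed with $\mu=\mathbb{E}[X]=lp$ rather than the expected reliability, and that swapping $\mu_{\hat{R}}$ for its upper bound is not an obviously monotone operation --- are real, but the paper's proof performs both substitutions by fiat without addressing them, so your reconstruction matches the published argument as it stands.
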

\begin{proof}

The proof for this theorem is very similar to the proof for the lower tail, as we saw in the Theorem \ref{Theorem-Weibull-Hazard}. As before,
\begin{multline}
\label{Reliability-Otherform}
     Pr\Big[e^{-Xt} > e^{\frac{-Kt^{m+1}}{m+1}}\Big] =  Pr\Big[e^{Xt} < e^{\frac{Kt^{m+1}}{m+1}}\Big] \\
     = Pr\Big[Xt < \frac{Kt^{m+1}}{m+1}\Big]
     = Pr\Big[X < \frac{Kt^m}{m+1}\Big]
\end{multline}
Now, we wish to obtain a tight lower bound for the random variable, $X$, deviates far from the value $\frac{Kt^m}{m+1}$. Here, for some $K>0, t>0$, and $m>-1$, the value $\frac{Kt^m}{m+1}$ is assumed to be below the expectation, $\mu_{\hat{R}}$, in a given time period $[0,t]$. Now, equating the $(1-\delta)\mu$ in \ref{LowerChernoff} with $\frac{Kt^m}{m+1}$ in Equation \ref{Reliability-Otherform}, then we have:
\begin{align*}
    (1-\delta)\mu = \frac{Kt^m}{m+1}
\end{align*}
\begin{align}
    \Rightarrow \delta = 1 - \frac{Kt^m}{(m+1)\mu}
\end{align}
Since, we are finding the tight upper bound for the deviation of a random variable ($e^{-Xt}$) from the reliability of a manually tested software, in the above equation, we will replace $\mu$ with the expected reliability $\mu_{\hat{R}}$ of the software that is tested by using the SDP models. Substitute the expected reliability of a software (from Equation \ref{Expected Reliability-SDP}) in the above equation to get the final value of $\delta$.
\begin{align}
\label{Weibull-Reliability-Delta}
    \Rightarrow \delta = 1 - \frac{Kt^me^{-lp(e^{-t}-1)}}{(m+1)}
\end{align}
\iffalse
\begin{figure}
\centering
\includegraphics[height=5cm]{UpperBound.jpg}
\caption{Deviation of the exponent form of a random variable ($e^{-Xt}$ or the reliability from the random variable $X$) above $e^{\frac{-Kt^{m+1}}{m+1}}$ is equivalent to the deviation of the random variable $X$ below $\frac{Kt^m}{m+1}$.}
\label{Image:UpperBound}
\end{figure}
\fi
Here, the value of $\delta$ represents the band value from the expectation, $\mu_{\hat{R}}$. Now substitute the value of $\delta$ (from Equation \ref{Weibull-Reliability-Delta}) in Equation \ref{LowerChernoff}, then:
\begin{multline}
    Pr\Big[e^{-Xt} > e^{\frac{-Kt^{m+1}}{m+1}}\Big] = \\ Pr\Big[X < \frac{Kt^m}{m+1}\Big] < e^{\frac{-e^{lp(e^{-t}-1)}\big[1 - \frac{Kt^me^{-lp(e^{-t}-1)}}{(m+1)}\big]^2}{2}}
\end{multline}
After simplification, we arrive at the end of the proof.
\end{proof}
The Theorem \ref{Theorem-Weibull-Reliability} shows that the probability of obtaining better reliability in a software that is tested with the SDP model than in same software that is tested by humans is bounded by the maximum value of  $e^{-\big[e^{lp(e^{-t}-1)}-\frac{Kt^m}{m+1}\big]^2\frac{1}{2e^{lp(e^{-t}-1)}}}$. Similar to the result of Theorem \ref{Theorem-Weibull-Hazard}, if we observe, the term in the exponent is negative. A large deviation between the terms such as $e^{-lp(e^{-t}-1)}$, and $\frac{Kt^m}{m+1}$ in the numerator of the exponent will lead to obtain even more sharp bounds. In such a way, the term $e^{-\big[e^{lp(e^{-t}-1)}-\frac{Kt^m}{m+1}\big]^2\frac{1}{2e^{lp(e^{-t}-1)}}}$ approaches 1 if the developed prediction model works equivalent to the quality of the testers. Similarly, the term $e^{-\big[e^{lp(e^{-t}-1)}-\frac{Kt^m}{m+1}\big]^2\frac{1}{2e^{lp(e^{-t}-1)}}}$ approaches 0 if the developed prediction model work poor when compared with the quality of the testers. Precisely, if the estimated reliability of the manually tested software deviates far from the expectation (of the random variable), then the bound becomes more tight.

The Theorems \ref{Theorem-Weibull-Hazard} and \ref{Theorem-Weibull-Reliability} provide a theoretical basis for investigating the feasibility of the developed SDP models in real-world testing environments.

By using Theorems \ref{Theorem-Weibull-Hazard} and \ref{Theorem-Weibull-Reliability}, in Section \ref{Supplements}, we provide more bounds (in terms of both the hazard rate and reliability) for the supplements of the Weibull distribution. All the proofs provided in Section \ref{Supplements} are used to understand the change in the behaviour of the bounds at various forms of the Weibull distribution.
\section{Supplementary Proofs}
\label{Supplements}
In this section, we provide proofs based on the supplements of the Weibull distribution. All the proofs provided in this section are based on the theorems \ref{Theorem-Weibull-Hazard} and \ref{Theorem-Weibull-Reliability}. In each sub-section, we provide a possible scenario to observe the specific functional form of the hazard rate. All the proofs provided in this section are used to analyse the feasibility of the developed prediction model (in terms of both the hazard rate and the reliability) when compared with some specific forms of the reliability indices, such as the hazard rate and the reliability. This aids in incorporating and analysing various hazard (also reliability) functions in order to generate sharp bounds. 
\subsection{The Non-Linearly Decreasing Hazard Model}
\label{The Non-Linearly Decreasing Hazard Model}
In majority cases, the hazards in software are typically represented as a non-linearly decreasing function of time \cite{lyu1996handbook}.  For example, from Figure \ref{RevisedBathTub}, it is observed that every upgrade of software leads to early failures in a software system. Hence, in the early stages, quality control and initial product testing teams will eliminate the substandard functionalities to avoid the higher hazard rate. After several iterations of testing, the hazards in the system will be reduced gradually to their minimal value. The simplest model that represents the non-linearly decreasing function is the inverse time function. That is, as time progresses, the hazards in a system are reduced gradually, until eventually the system has fewer hazards. It should be noted that any complex function can be substituted for the function we used to represent the non-linearly decreasing hazard model. 

In Section \ref{Sec: Non-Linearly Decreasing Hazard Model}, we provide a tighter lower bound for the possibility of achieving minimum hazards in a software that uses SDP than the hazards in a manually tested software. Similarly, in Section \ref{Sec: Non-Linearly Decreasing Reliability Model} we provide a tight upper bound for the possibility of achieving maximum reliability for a software that uses SDP, when compared with the reliability of a manually tested software. In both cases, the hazard rate in the system is assumed to be a non-linearly decreasing function of time.
\subsubsection{Bounds in terms of the Hazard Model}
\label{Sec: Non-Linearly Decreasing Hazard Model}
\begin{definition}
For any manually tested software system, and for any $K>0, t>0$, the hazards that follow a non-linearly decreasing function of time are defined as:
\begin{align}
\label{Eq: Non-Linearly Decreasing Hazard}
    z_{nld}(t) = \frac{K}{\sqrt{t}}, \text{ for some } K>0, \text{and time } t>0
\end{align}
\end{definition}
Now, the following corollary defines the deviation of a random variable $X$ below the value, $z_{nld}(t)$.
\begin{corollary}
\label{Corollary-Non-Linear-Decreasing-Hazard}
Let $X_1, X_2, \dots, X_l$ be the independent Bernoulli trials such that for, $1\leq i \leq l$, $Pr[X_i = 1] = p$, where $0 < p < 1$. Then for X = $\sum_{i=1}^{l} X_i, \mu_{\hat{z}} = \mathbb{E}[X] = \sum_{i=1}^{l} p = lp$, and $\exists$ parameters $K>0$, time $t>0$, and for the non-linearly decreasing hazard rate, $\frac{K}{\sqrt{t}}$:
\begin{align}
    Pr[X < \frac{K}{\sqrt{t}}] < e^{\frac{-(\sqrt{t}\mu_{\hat{z}}-K)^2}{2\mu_{\hat{z}} t}}
\end{align}
\end{corollary}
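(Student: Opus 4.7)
The plan is to recognize that the non-linearly decreasing hazard model $z_{nld}(t) = K/\sqrt{t}$ is merely a specific instance of the general Weibull hazard form $z(t) = Kt^m$ from Definition \ref{Definition: Weibull Hazard}, obtained by taking $m = -1/2$. Since $-1/2 > -1$, the parameter restriction $m > -1$ required by Theorem \ref{Theorem-Weibull-Hazard} is satisfied, so the result follows by direct specialization rather than by rerunning the Chernoff argument from scratch.

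First I would verify the parameter matching: with $m = -1/2$ and $K > 0$, the Weibull form $Kt^m$ becomes $Kt^{-1/2} = K/\sqrt{t}$, which is exactly $z_{nld}(t)$. Then I would cite Theorem \ref{Theorem-Weibull-Hazard} and write down the bound it provides in this specialized case, namely
\begin{equation*}
    Pr\bigl[X < K/\sqrt{t}\bigr] < e^{\frac{-(\mu_{\hat{z}} - K/\sqrt{t})^2}{2\mu_{\hat{z}}}}.
\end{equation*}

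The only remaining work is an algebraic rewrite of the exponent so that it matches the form stated in the corollary. The key observation is that factoring $1/\sqrt{t}$ out of the squared difference gives
\begin{equation*}
    \bigl(\mu_{\hat{z}} - K/\sqrt{t}\bigr)^2 = \frac{1}{t}\bigl(\sqrt{t}\,\mu_{\hat{z}} - K\bigr)^2,
\end{equation*}
and pushing the $1/t$ into the denominator together with $2\mu_{\hat{z}}$ yields $2\mu_{\hat{z}} t$, which matches the stated bound exactly.

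There is no real obstacle here, as the Chernoff machinery has already been set up in the proof of Theorem \ref{Theorem-Weibull-Hazard}; the mildly tricky bookkeeping point is just to check that the substitution $m = -1/2$ respects the domain requirement $m > -1$, and that the assumed inequality $K/\sqrt{t} < \mu_{\hat{z}}$ (implicit in requiring the value to lie to the left of the expectation, so that $\delta \in (0,1]$ in the Chernoff setup) is inherited as the analogous side condition for this corollary. Once these are noted, the proof is essentially a one-line appeal to the parent theorem followed by a routine simplification of the exponent.
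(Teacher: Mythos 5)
Your proposal is correct and follows essentially the same route as the paper: the paper's own proof likewise substitutes $z_{nld}(t)=K/\sqrt{t}$ (i.e., the Weibull form with $m=-1/2$) into Theorem \ref{Theorem-Weibull-Hazard} and then simplifies the exponent, and your algebraic rewrite $\bigl(\mu_{\hat{z}}-K/\sqrt{t}\bigr)^2=\tfrac{1}{t}\bigl(\sqrt{t}\,\mu_{\hat{z}}-K\bigr)^2$ reproduces the stated bound exactly. Your explicit checks that $m=-1/2>-1$ and that $K/\sqrt{t}$ must lie below $\mu_{\hat{z}}$ are worthwhile additions that the paper leaves implicit.
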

\begin{proof}
Substitute the value of $z(t)$ with the value $z_{nld}(t)$ (from Equation \ref{Eq: Non-Linearly Decreasing Hazard}) in Theorem \ref{Theorem-Weibull-Hazard}, then we get:
\begin{align}
    Pr[X < \frac{K}{\sqrt{t}}] < e^{\frac{-\mu_{\hat{z}}\Big[1-\frac{K}{\sqrt{t}\mu_{\hat{z}}}\Big]^2}{2}}
\end{align}
Simplifying the above equation will lead to the proof of the corollary.
\end{proof}
The Corollary \ref{Corollary-Non-Linear-Decreasing-Hazard} provides evidence that, for a non-linearly decreasing hazard model, the probability of the occurrence of fewer hazards in the software that uses SDP than the occurrence of the total number of hazards in the same software that is tested by humans is bounded by the maximum value of $e^{\frac{-(\sqrt{t}\mu_{\hat{z}}-K)^2}{2\mu_{\hat{z}} t}}$.
\subsubsection{Bounds in terms of the Reliability Model}
\label{Sec: Non-Linearly Decreasing Reliability Model}
In this section, we first define the reliability of a software that follows a non-linearly decreasing hazard model of the form, $\frac{K}{\sqrt{t}}$, before determining the bound. 
\begin{lemma}
\label{Lemma-Reliability-NLD}
For the non-linearly decreasing function of hazard model $z_{nld}(t) = \frac{K}{\sqrt{t}}, \text{ for some } K>0, \text{ and time } t>0$, its reliability is:
\begin{align}
\label{Reliability-NLD}
    R_{nld}(t) = e^{-2K\sqrt{t}}
\end{align}
\end{lemma}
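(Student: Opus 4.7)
The plan is to obtain $R_{nld}(t)$ by a direct application of the general hazard-to-reliability relation already established in Equation~\ref{Reliability from Hazard Rate}, namely $R(t) = e^{-\int_{0}^{t} z(x)\,dx}$, specialised to the non-linearly decreasing hazard $z_{nld}(x) = K/\sqrt{x}$ from Equation~\ref{Eq: Non-Linearly Decreasing Hazard}. Because the exponentiation step is purely formal, essentially everything reduces to evaluating the definite integral $\int_{0}^{t} K x^{-1/2}\,dx$.

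First I would substitute $z_{nld}$ into the reliability formula to write
\begin{align*}
R_{nld}(t) = e^{-\int_{0}^{t} K x^{-1/2}\,dx}.
\end{align*}
Next I would compute the antiderivative of $Kx^{-1/2}$, which is $2K\sqrt{x}$, and evaluate it between the limits $0$ and $t$. Since $2K\sqrt{x} \to 0$ as $x \to 0^{+}$, the improper integral converges and equals $2K\sqrt{t}$. Substituting back into the exponential yields exactly $R_{nld}(t) = e^{-2K\sqrt{t}}$, which is the claim of Lemma~\ref{Lemma-Reliability-NLD}.

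The only subtlety, and the single point I would flag explicitly in the write-up, is the integrability of $z_{nld}$ near $t = 0$: the integrand $Kx^{-1/2}$ diverges at the origin, but the singularity is weak enough ($-1/2 > -1$) for the improper integral to converge. This mirrors the same convergence condition hidden in Lemma~\ref{Lemma-Reliability-Wibull} via the constraint $m > -1$. No other obstacle arises, since the derivation is structurally identical to the Weibull case already handled; in fact, one could alternatively view the result as a direct corollary of Lemma~\ref{Lemma-Reliability-Wibull} with the parameter choice $m = -1/2$, giving $R(t) = e^{-Kt^{1/2}/(1/2)} = e^{-2K\sqrt{t}}$, matching the formula derived from scratch.
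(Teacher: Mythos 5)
Your proposal is correct and follows essentially the same route as the paper's own proof: substitute $z_{nld}(x)=K/\sqrt{x}$ into Equation~\ref{Reliability from Hazard Rate} and evaluate $\int_0^t Kx^{-1/2}\,dx = 2K\sqrt{t}$. Your explicit remark on the convergence of the improper integral at $x=0$ (and the observation that the lemma is the $m=-1/2$ case of Lemma~\ref{Lemma-Reliability-Wibull}) is a welcome addition that the paper leaves implicit, but it does not change the argument.
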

\begin{proof}
We assume that the hazards in software are a non-linearly decreasing function of time. It is given as:
\begin{align*}
    z_{nld}(t) = \frac{K}{\sqrt{t}}, \text{for some } K>0, \text{and time } t>0
\end{align*}
Now, substituting $z_{nld}(t)$ in $R(t)$ (in Equation \ref{Reliability from Hazard Rate}) will results in:
\begin{align}
\label{Reliability-NLD-Inintial}
    R_{nld}(t) = e^{-\int_0^t \frac{K}{\sqrt{x}} dx}
\end{align}
Now, simplifying Equation \ref{Reliability-NLD-Inintial} will give us the reliability of a software that is estimated from the non-linearly decreasing function of the hazard model.
\end{proof}

\begin{corollary}
\label{Corollary-Non-Linear-Decreasing-Reliability}
Let $X_1, X_2, \dots, X_l$ be the independent Bernoulli trials such that for, $1\leq i \leq l$, $Pr[X_i = 1] = p$, where $0 < p < 1$. Then for X = $\sum_{i=1}^{l} X_i, \mu_{\hat{R}} < e^{lp(e^{-t}-1)}$, and $\exists$ parameters $K>0$, time $t>0$, and for the reliability of the software that follows a non-linearly decreasing hazard rate, $R_{nld}(t) = e^{-2K\sqrt{t}}$:
\begin{align}
    Pr\Big[e^{-Xt} > e^{-2K\sqrt{t}}\Big] < e^{\frac{-e^{lp(1-e^{-t})}}{2}\Big[e^{lp(e^{-t}-1)}-\frac{2K}{\sqrt{t}}\Big]^2}
\end{align}
\end{corollary}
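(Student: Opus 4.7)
The plan is to mirror the proof of Theorem \ref{Theorem-Weibull-Reliability} almost verbatim, swapping in the non-linearly decreasing reliability $R_{nld}(t) = e^{-2K\sqrt{t}}$ obtained from Lemma \ref{Lemma-Reliability-NLD} for the Weibull reliability $e^{-Kt^{m+1}/(m+1)}$. The first move is to convert the stated tail event on reliabilities into a lower-tail event on $X$: taking logarithms on both sides of $e^{-Xt} > e^{-2K\sqrt{t}}$ and dividing by $-t$ flips the inequality, so that $Pr\bigl[e^{-Xt} > e^{-2K\sqrt{t}}\bigr]$ becomes $Pr\bigl[X < 2K/\sqrt{t}\bigr]$. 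This is precisely the form to which the Chernoff lower-tail inequality (Equation \ref{LowerChernoff}) applies.

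Next, following the convention established in Theorem \ref{Theorem-Weibull-Reliability}, I would identify the deviated value $(1-\delta)\mu$ with $2K/\sqrt{t}$, solve for $\delta = 1 - 2K/(\sqrt{t}\,\mu)$, and then substitute the expected-reliability estimate $\mu_{\hat{R}} < e^{lp(e^{-t}-1)}$ from Equation \ref{Expected Reliability-SDP} in place of $\mu$. Using the identity $e^{-lp(e^{-t}-1)} = e^{lp(1-e^{-t})}$, the resulting $\delta$ acquires a factor of $e^{lp(1-e^{-t})}$ inside the square, and this is what will eventually produce the $e^{lp(1-e^{-t})}/2$ prefactor appearing in the statement.

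The final step is the algebraic simplification of the raw Chernoff output $e^{-\mu_{\hat{R}}\delta^2/2}$. The key manipulation is to pull $e^{lp(1-e^{-t})}$ out of the squared bracket so that it rearranges into $\bigl[e^{lp(e^{-t}-1)} - 2K/\sqrt{t}\bigr]^2$, again exploiting the reciprocal identity between $e^{lp(e^{-t}-1)}$ and $e^{lp(1-e^{-t})}$. Once that factoring is carried out the expression matches the claimed bound. I do not anticipate any genuine obstacle, since the corollary is just a specialisation of Theorem \ref{Theorem-Weibull-Reliability}; the only places where care is needed are the sign flip when dividing by $-t$ in the very first step, and the bookkeeping of the two reciprocal exponential factors during the final simplification so that the final form lands in the precise shape stated in the corollary.
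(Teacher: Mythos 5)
Your proposal matches the paper's proof essentially step for step: the same logarithmic conversion of $Pr\bigl[e^{-Xt} > e^{-2K\sqrt{t}}\bigr]$ into the lower-tail event $Pr\bigl[X < \tfrac{2K}{\sqrt{t}}\bigr]$, the same identification $(1-\delta)\mu = \tfrac{2K}{\sqrt{t}}$ with substitution of $\mu_{\hat{R}} < e^{lp(e^{-t}-1)}$ in place of $\mu$, and the same final factoring via the reciprocal identity $e^{-lp(e^{-t}-1)} = e^{lp(1-e^{-t})}$ to reach the stated bound. No substantive difference from the paper's argument.
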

\begin{proof}
Similar to Theorem \ref{Theorem-Weibull-Reliability}, the proof for this upper tail is very similar to the proof for the lower tail as we also saw in Corollary \ref{Corollary-Non-Linear-Decreasing-Hazard}. As before,
\begin{multline}
\label{NLD-Reliability-Formating}
     Pr\Big[e^{-Xt} > e^{-2K\sqrt{t}}\Big] = Pr\Big[-Xt > -2K\sqrt{t}\Big] \\ = Pr\Big[-X > -\frac{2K}{\sqrt{t}}\Big] = Pr\Big[X < \frac{2K}{\sqrt{t}}\Big]
\end{multline}

Here, we wish to obtain a tight lower bound that the random variable, $X$, deviates far from the value $\frac{2K}{\sqrt{t}}$. Here, for some $K>0$ and $t>0$, the value $\frac{2K}{\sqrt{t}}$ is assumed to be below the expectation, $\mu_{\hat{R}}$, in a given time period $[0,t]$. Now, equating the $(1-\delta)\mu$ in \ref{LowerChernoff} with $\frac{2K}{\sqrt{t}}$, then we have:
\begin{align*}
    (1-\delta)\mu = \frac{2K}{\sqrt{t}}
\end{align*}
\begin{align}
\label{NLD-Reliability-Delta}
    \Rightarrow \delta = 1 - \frac{2K}{\mu\sqrt{t}}
\end{align}
Since, we are finding the tight upper bound (we eventually converted it into finding the tight lower bound of some other form) for the deviation of a random variable ($e^{-Xt}$) from the reliability of a manually tested software, in the above equation, we will replace $\mu$ with the expected reliability $\mu_{\hat{R}}$ of the software that is tested by using the SDP models. Now, substitute the value of $\delta$ (from Equation \ref{NLD-Reliability-Delta}) in Equation \ref{LowerChernoff} to find the tight upper bound for the deviation of a random variable $e^{-Xt}$ far from the value $e^{-2K\sqrt{t}}$. This is given as: 
\begin{align}
    Pr\Big[X < \frac{2K}{\sqrt{t}}\Big] < e^{\frac{-\mu_{\hat{R}}\big[1-\frac{2K}{\sqrt{t}\mu_{\hat{R}}}\big]^2}{2}}
\end{align}
The below equation is the simplification of the above Equation.
\begin{align}
       Pr\Big[X < \frac{2K}{\sqrt{t}}\Big] = Pr\Big[e^{-Xt} > e^{-2K\sqrt{t}}\Big] < e^{-\frac{1}{2\mu_{\hat{R}}}\Big[\mu_{\hat{R}}-\frac{2K}{\sqrt{t}}\Big]^2}
\end{align}
Substituting the expected reliability, $\mu_{\hat{R}} = e^{lp(e^{-t}-1)}$ in the above equation will results in the end of the proof.
\end{proof}
The Corollary \ref{Corollary-Non-Linear-Decreasing-Reliability} shows that, in terms of reliability (that is calculated from a non-linearly decreasing hazard model), the probability of obtaining better reliability in the software that is tested using the SDP model than the same software that is tested by humans is bounded by the maximum value of $e^{\frac{-e^{lp(1-e^{-t})}}{2}\Big[e^{lp(e^{-t}-1)}-\frac{2K}{\sqrt{t}}\Big]^2}$.

\subsection{The Linearly Decreasing Hazard Model}
\label{The Linearly Decreasing Model}
Assume that the hazards in the software are detected and removed in a linear fashion. Even though this is a rare case, we are providing the proofs (in terms of both the hazard rate and reliability) for the linearly decreasing hazard model. The simplest linearly decreasing function of hazards is assumed to be a slope-intercept form of the equation of the line, where, at the negative gradient, the equation of the line becomes a linearly decreasing function of time $t$.

In Section \ref{Sec: Linearly Decreasing Hazard Model}, we provide a proof for the possibility of achieving minimum hazards in a software that uses SDP compared to the hazards in a manually tested software. Similarly, in Section \ref{Sec: Linearly Decreasing Reliability Model} we provide a proof for the possibility of achieving maximum reliability for a software that uses SDP, when compared with the reliability of a manually tested software. In both cases, the hazard rate in the system is assumed to be a linearly decreasing function of time.
\subsubsection{Bounds in terms of the Hazard Model}
\label{Sec: Linearly Decreasing Hazard Model}
\begin{definition}
For any manually tested software system and for any $K>0$ and $m>0$, the hazards that follow a linearly decreasing function of time are defined as:
\begin{align}
\label{Eq: Linearly Decreasing Hazard}
    z_{ld}(t) = K-mt, \text{ for some } K>0, m>0 \text{ and time } t>0
\end{align}
\end{definition}

Now, the following corollary defines the deviation of a random variable $X$ below the value, $z_{ld}(t)$.
\begin{corollary}
\label{Corollary-Linearly-Decreasing-Hazard}
Let $X_1, X_2, \dots, X_l$ be the independent Bernoulli trials such that for, $1\leq i \leq l$, $Pr[X_i = 1] = p$, where $0 < p < 1$. Then for X = $\sum_{i=1}^{l} X_i, \mu_{\hat{z}} = \mathbb{E}[X] = \sum_{i=1}^{l} p = lp$, and $\exists$ parameters $K>0, m>0$, time $t>0$, and for the linearly decreasing hazard rate, $K-mt$:
\begin{align}
    Pr[X < K-mt] < e^{\frac{-(\mu_{\hat{z}}-K+mt)^2}{2\mu_{\hat{z}}}}
\end{align}
\end{corollary}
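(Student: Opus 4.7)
The plan is to instantiate the Chernoff-bound template used in the proof of Theorem \ref{Theorem-Weibull-Hazard}, exactly as was done in Corollary \ref{Corollary-Non-Linear-Decreasing-Hazard}, but now with the hazard function taken to be the linearly decreasing form $z_{ld}(t) = K - mt$ in place of the Weibull form $Kt^m$. The Chernoff machinery itself does not depend on the particular functional form of the manually tested software's hazard rate; it only requires that the target value to be deviated below lie in the interval $[0, \mu_{\hat{z}})$, so the proof should be a direct substitution followed by algebraic simplification.

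First I would start from the lower-tail Chernoff inequality recalled in Equation \ref{LowerChernoff}:
\begin{equation*}
Pr[X < (1-\delta)\mu] < e^{-\mu\delta^2/2}.
\end{equation*}
Next, I would align the deviation target $(1-\delta)\mu$ with the linearly decreasing hazard value $K - mt$, and replace $\mu$ by the expected hazard count $\mu_{\hat{z}} = lp$ derived in Equation \ref{Expectation}. Solving $(1-\delta)\mu_{\hat{z}} = K - mt$ yields $\delta = 1 - \frac{K - mt}{\mu_{\hat{z}}}$. Substituting this $\delta$ back into the tail bound gives
\begin{equation*}
Pr[X < K - mt] < e^{-\frac{\mu_{\hat{z}}}{2}\left[1 - \frac{K - mt}{\mu_{\hat{z}}}\right]^2}.
\end{equation*}
A routine simplification (expand the square, absorb one factor of $\mu_{\hat{z}}$ into the denominator) collapses the exponent to $-(\mu_{\hat{z}} - K + mt)^2 / (2\mu_{\hat{z}})$, which is the asserted bound.

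The only subtlety, rather than a real obstacle, is verifying that $\delta \in (0, 1]$ so that the Chernoff template is legitimately applicable. This amounts to assuming $0 \le K - mt < \mu_{\hat{z}}$ on the time interval $[0, t]$ under consideration, which mirrors the implicit positional assumption in Theorem \ref{Theorem-Weibull-Hazard} that $Kt^m$ lies strictly below $\mu_{\hat{z}}$. Because the corollary is a direct instantiation of an already proved template and introduces no new distributional feature, I do not expect any step to require genuinely new ideas beyond this range check and a few lines of algebra.
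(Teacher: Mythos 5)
Your proposal is correct and follows essentially the same route as the paper: the paper likewise proves this corollary by substituting $z_{ld}(t)=K-mt$ for $Kt^m$ in the Chernoff template of Theorem \ref{Theorem-Weibull-Hazard}, obtaining $\delta = 1-\frac{K-mt}{\mu_{\hat{z}}}$ and simplifying the exponent to $-(\mu_{\hat{z}}-K+mt)^2/(2\mu_{\hat{z}})$. Your explicit remark that one must check $0 \le K-mt < \mu_{\hat{z}}$ so that $\delta\in(0,1]$ is a point the paper leaves implicit, but it matches the positional assumption the paper makes throughout.
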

\begin{proof}
Substitute the value of $z(t)$ with the value $z_{ld}(t)$ (from Equation \ref{Eq: Linearly Decreasing Hazard}) in Theorem \ref{Theorem-Weibull-Hazard}, then we get:
\begin{align}
    Pr[X < K-mt] < e^{\frac{-\mu_{\hat{z}}\Big[1-\frac{K-mt}{\mu_{\hat{z}}}\Big]^2}{2}}
\end{align}
A few steps of simplification results in deriving the tight bound for the deviation of the random variable $X$ below the value, $z_{ld}(t)$.
\end{proof}
The Corollary \ref{Corollary-Linearly-Decreasing-Hazard} provides an evidence that, for a linearly decreasing hazard model, the probability of occurrence of less hazards in the software that uses SDP, than the occurrence of the total hazards in the same software, that is tested by the human, is bounded by the maximum value of $e^{\frac{-(\mu_{\hat{z}}-K+mt)^2}{2\mu_{\hat{z}}}}$.
\subsubsection{Bounds in terms of the Reliability Model}
\label{Sec: Linearly Decreasing Reliability Model}
In the following lemma, we define the reliability of a software, that is calculated from a linearly decreasing function of hazard rate.
\begin{lemma}
\label{Lemma-Reliability-LD}
For the linearly decreasing function of hazard model $z_{ld}(t) = K-mt, \text{ for some } K>0, m>0 \text{ and time } t>0$, its reliability is:
\begin{align}
\label{Reliability-LD}
    R_{ld}(t) = e^{\big[\frac{mt^2}{2}-Kt\big]}
\end{align}
\end{lemma}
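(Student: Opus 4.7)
The plan is to apply the general relation between hazard rate and reliability given in Equation \ref{Reliability from Hazard Rate}, namely $R(t) = e^{-\int_0^t z(x)\, dx}$, to the specific hazard function $z_{ld}(t) = K - mt$. This follows the same template as the proofs of Lemma \ref{Lemma-Reliability-Wibull} and Lemma \ref{Lemma-Reliability-NLD}, so the structure of the argument is already fixed and only the integrand changes.

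First I would substitute $z_{ld}(x) = K - mx$ into the reliability formula, obtaining $R_{ld}(t) = \exp\bigl(-\int_0^t (K - mx)\, dx\bigr)$. Next I would evaluate the antiderivative term by term: the integral of $K$ over $[0,t]$ is $Kt$, and the integral of $-mx$ over $[0,t]$ is $-mt^2/2$. Combining these and negating the result inside the exponent yields $R_{ld}(t) = e^{mt^2/2 - Kt}$, which is exactly the expression claimed in Equation \ref{Reliability-LD}. A single display would suffice to chain these equalities.

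There is no real obstacle in the computation itself, since it is a routine polynomial integral. The only point that deserves a brief comment is the implicit domain restriction: the formula $z_{ld}(t) = K - mt$ only represents a meaningful hazard rate on the interval where it stays non-negative, i.e.\ for $t \le K/m$. I would note that the derivation of $R_{ld}(t)$ is purely algebraic and does not depend on this restriction, but that the resulting reliability should be interpreted only on that interval when used in subsequent bounds, so that it remains consistent with the linearly decreasing hazard assumption used in Corollary \ref{Corollary-Linearly-Decreasing-Hazard}.
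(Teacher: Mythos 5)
Your proposal is correct and follows exactly the paper's route: substituting $z_{ld}(x)=K-mx$ into $R(t)=e^{-\int_0^t z(x)\,dx}$ and evaluating the polynomial integral to obtain $e^{mt^2/2-Kt}$. The added remark about restricting to $t\le K/m$ so the hazard rate stays non-negative is a sensible observation the paper omits, but it does not change the argument.
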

\begin{proof}
We assume that the hazards in a manually-tested software are a linearly decreasing function of time. It is given as:
\begin{align*}
    z_{ld}(t) = K-mt, \text{ for some } K>0, m>0 \text{ and time } t>0
\end{align*}
Now, substituting $z_{ld}(t)$ in $R(t)$ (from Equation \ref{Reliability from Hazard Rate}) will results in:
\begin{align}
\label{Reliability-LD-Inintial}
    R_{ld}(t) = e^{-\int_0^t K-mx \text{ } dx}
\end{align}
Now, simplifying Equation \ref{Reliability-LD-Inintial} will result in the proof.
\end{proof}
\begin{corollary}
\label{Corollary-Linearly-Decreasing-Reliability}
Let $X_1, X_2, \dots, X_l$ be the independent Bernoulli trials such that for, $1\leq i \leq l$, $Pr[X_i = 1] = p$, where $0 < p < 1$. Then for X = $\sum_{i=1}^{l} X_i, \mu_{\hat{R}} < e^{lp(e^{-t}-1)}$, and $\exists$ parameters $K>0, m>0$, time $t>0$, and for the reliability of the software that follows a linearly decreasing hazard rate, $R_{ld}(t) = e^{\big[\frac{mt^2}{2}-Kt\big]}$:
\begin{align}
        Pr\Big[e^{-Xt} > e^{\big[\frac{mt^2}{2}-Kt\big]}\Big] < e^{\frac{-e^{lp(1-e^{-t})}[2e^{lp(e^{-t}-1)}-2K+mt]^2}{8}}
\end{align}
\end{corollary}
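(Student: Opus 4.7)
The approach is to mirror the proofs of Theorem \ref{Theorem-Weibull-Reliability} and Corollary \ref{Corollary-Non-Linear-Decreasing-Reliability}: first rewrite the reliability inequality as a lower-tail event on $X$, and then apply the Chernoff lower bound of Equation \ref{LowerChernoff} with the expected reliability $\mu_{\hat{R}}$ (from Equation \ref{Expected Reliability-SDP}) playing the role of $\mu$. Lemma \ref{Lemma-Reliability-LD} has already computed the manually tested reliability $R_{ld}(t) = e^{mt^2/2 - Kt}$, which is what we are deviating above.

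For the reduction step, since both sides of $e^{-Xt} > e^{mt^2/2 - Kt}$ are positive, I would take natural logarithms to get $-Xt > \frac{mt^2}{2} - Kt$, and then divide through by $-t$ (which reverses the inequality because $t>0$) to arrive at the equivalent event $X < K - \frac{mt}{2}$. This puts the problem in precisely the form that Equation \ref{LowerChernoff} addresses. I would then identify $K - \frac{mt}{2}$ with $(1-\delta)\mu_{\hat{R}}$, solve to get $\delta = 1 - \frac{2K - mt}{2\mu_{\hat{R}}}$, and substitute into the Chernoff estimate to obtain
\begin{align*}
Pr\Big[e^{-Xt} > e^{\frac{mt^2}{2}-Kt}\Big] = Pr\Big[X < K - \tfrac{mt}{2}\Big] < e^{-\mu_{\hat{R}}\delta^2/2}.
\end{align*}
Expanding $\mu_{\hat{R}}\delta^2/2$ collapses to $(2\mu_{\hat{R}} - 2K + mt)^2 / (8\mu_{\hat{R}})$, and substituting $\mu_{\hat{R}} = e^{lp(e^{-t}-1)}$ (so that $1/\mu_{\hat{R}} = e^{lp(1-e^{-t})}$) yields the claimed bound.

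The main obstacle is really just careful algebraic bookkeeping plus tracking the parameter range in which the Chernoff bound is applicable: we implicitly need $\delta \in (0,1]$, i.e.\ $0 \leq K - \frac{mt}{2} < \mu_{\hat{R}}$. The left inequality is already implied by the physical requirement that the linearly decreasing hazard $K - mt$ remain non-negative in the observation window (forcing $t \leq K/m$ and hence $K - mt/2 \geq K/2 > 0$), while the right inequality is exactly the regime in which the manually tested software is, in expectation, more reliable than the SDP-based software --- the only regime in which the deviation bound carries real content. Once these constraints are acknowledged, the remainder is purely a substitution exercise.
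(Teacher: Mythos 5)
Your proposal is correct and follows essentially the same route as the paper's own proof: reduce the event $e^{-Xt} > e^{mt^2/2 - Kt}$ to $X < K - \tfrac{mt}{2}$, set $(1-\delta)\mu_{\hat{R}} = K - \tfrac{mt}{2}$, and substitute into the Chernoff lower-tail bound to get $e^{-(2\mu_{\hat{R}}-2K+mt)^2/(8\mu_{\hat{R}})}$ with $\mu_{\hat{R}} = e^{lp(e^{-t}-1)}$. Your explicit discussion of the admissible range $0 \leq K - \tfrac{mt}{2} < \mu_{\hat{R}}$ is a point the paper leaves implicit, but it does not change the argument.
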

\begin{proof}
The proof for this upper tail is very similar to the proof for the lower tail, as we saw in Corollary \ref{Corollary-Linearly-Decreasing-Hazard}. As before,
\begin{multline}
\label{LD-Reliability-Formating}
     Pr\Big[e^{-Xt} > e^{\big[\frac{mt^2}{2}-Kt\big]}\Big] = Pr\Big[Xt < Kt-\frac{mt^2}{2}\Big] = \\ Pr\Big[X < K-\frac{mt}{2}\Big]
\end{multline}
Here, we wish to obtain a tight lower bound that the random variable, $X$, deviates far from the value $K-\frac{mt}{2}$. Here, for some $K>0, m>0$ and $t>0$, the value $K-\frac{mt}{2}$ is assumed to be below the expectation, $\mu_{\hat{R}}$, in a given time period $[0,t]$. Now, equating the $(1-\delta)\mu$ in \ref{LowerChernoff} with $K-\frac{mt}{2}$ (in \ref{LD-Reliability-Formating}), then we have:
\begin{align*}
    (1-\delta)\mu = K-\frac{mt}{2}
\end{align*}
\begin{align}
\label{LD-Reliability-Delta}
    \Rightarrow \delta = 1 - \frac{2K-mt}{2\mu}
\end{align}
Since, we are finding the tight upper bound (we eventually converted it into finding the tight lower bound of some other form) for the deviation of a random variable ($e^{-Xt}$) from the reliability of a manually tested software, in the above equation, we will replace $\mu$ with the expected reliability $\mu_{\hat{R}}$ of the software that is tested by using the SDP models. Now, substitute the value of $\delta$ (from Equation \ref{LD-Reliability-Delta}) in Equation \ref{LowerChernoff} to find the tight upper bound for the deviation of a random variable $X$ far from the value $K-\frac{mt}{2}$. This is given as: 
\begin{align}
    Pr\Big[X < K-\frac{mt}{2}\Big] < e^{\frac{-\mu_{\hat{R}}\big[1-\frac{2K-mt}{2\mu_{\hat{R}}}\big]^2}{2}}
\end{align}
Few steps of simplification will provide the following proof: 
\begin{align}
        Pr\Big[e^{-Xt} > e^{\big[\frac{mt^2}{2}-Kt\big]}\Big] < e^{-\frac{[2\mu_{\hat{R}}-2K+mt]^2}{8\mu_{\hat{R}}}}
\end{align}
Now substituting the expected reliability $\mu_{\hat{R}} = e^{lp(e^{-t}-1)}$ in the above equation will satisfies the Corollary.

\end{proof}
The Corollary \ref{Corollary-Linearly-Decreasing-Reliability} provides evidence that, in terms of reliability (that is calculated from a non-linearly decreasing hazard model), the probability of obtaining better reliability in the software that is tested by using the SDP model than the same software that is tested by humans is bounded by the maximum value of $e^{\frac{-e^{lp(1-e^{-t})}[2e^{lp(e^{-t}-1)}-2K+mt]^2}{8}}$.

\subsection{The Non-Linearly Increasing Hazard Model}
\label{The Non-Linearly Increasing Hazard Model}
Hazards may increase in the system due to the cascading effect \cite{lyu1996handbook}. In such cases, the failure of one module leads to the failure of the subsequent modules. This results in a non-linear (in fact, even exponential) increase in the hazards in the software system. The simplest non-linearly increasing hazard model that can be postulated is one in which the hazards increase in a quadratic fashion with time. The following sub-sections provide the proofs that are based on the hazard model that follows the non-linearly increasing hazard model.

In Section \ref{Sec: Non-Linearly Increasing Hazard Model}, we provide a proof for the possibility of achieving minimum hazards in a software that uses SDP compared to the hazards in a manually tested software. Similarly, in Section \ref{Sec: Non-Linearly Increasing Reliability Model} we provide a proof for the possibility of achieving maximum reliability for a software that uses SDP, when compared with the reliability of a manually tested software.
\subsubsection{Bounds in terms of the Hazard Model}
\label{Sec: Non-Linearly Increasing Hazard Model}
\begin{definition}
For any manually tested software system, and for any $K>0$, the hazards that follow a non-linearly increasing function of time are defined as:
\begin{align}
\label{Eq: Non-Linearly Increasing Hazard}
    z_{nli}(t) = Kt^2, \text{ for some } K>0 \text{ and time } t>0
\end{align}
\end{definition}
Similar to the other supplements, Equation \ref{Eq: Non-Linearly Increasing Hazard} is the special case of the Weibull hazard rate where, at the value of $m=2$, the hazard rate becomes a quadratic function of time. Now, the following corollary defines the deviation of a random variable, $X$ below the value, $Kt^2$.
\begin{corollary}
\label{Corollary-Non-Linearly-Increasing-Hazard}
Let $X_1, X_2, \dots, X_l$ be the independent Bernoulli trials such that for, $1\leq i \leq l$, $Pr[X_i = 1] = p$, where $0 < p < 1$. Then for X = $\sum_{i=1}^{l} X_i, \mu_{\hat{z}} = \mathbb{E}[X] = \sum_{i=1}^{l} p = lp$, and $\exists$ parameters $K>0$, time $t>0$, and for the non-linearly increasing hazard rate $z_{nli} = Kt^2$:
\begin{align}
    Pr[X < Kt^2] < e^{\frac{-(\mu_{\hat{z}}-Kt^2)^2}{2\mu_{\hat{z}}}}
\end{align}
\end{corollary}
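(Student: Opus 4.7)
The plan is to obtain this bound as a direct specialization of Theorem \ref{Theorem-Weibull-Hazard}, following exactly the template already used in Corollary \ref{Corollary-Non-Linear-Decreasing-Hazard} and Corollary \ref{Corollary-Linearly-Decreasing-Hazard}. The key observation is that the non-linearly increasing hazard rate $z_{nli}(t) = Kt^2$ is the Weibull hazard $z(t) = Kt^m$ under the admissible parameter choice $m = 2$ (which satisfies $m > -1$). Consequently, the statement is not a new inequality at all but the instantiation of the general Weibull bound at $m = 2$.

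First I would identify $z_{nli}(t) = Kt^2$ with the Weibull hazard form of Definition \ref{Definition: Weibull Hazard}. Then I would invoke Theorem \ref{Theorem-Weibull-Hazard} with this choice, obtaining
\[
Pr[X < Kt^2] < e^{\frac{-\mu_{\hat{z}}\left[1 - \frac{Kt^2}{\mu_{\hat{z}}}\right]^2}{2}},
\]
exactly as in the analogous corollaries. A few lines of algebraic simplification --- expanding the square and clearing the fraction --- then yield the claimed form $e^{-(\mu_{\hat{z}} - Kt^2)^2 / (2\mu_{\hat{z}})}$.

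For completeness one could alternatively re-derive the bound from scratch by equating $(1-\delta)\mu$ in the Chernoff lower-tail inequality (Equation \ref{LowerChernoff}) to $Kt^2$, solving $\delta = 1 - Kt^2/\mu_{\hat{z}}$, and substituting back with $\mu = \mu_{\hat{z}} = lp$; this reproduces the same expression without routing through the parent theorem.

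The main obstacle is essentially nonexistent, since the derivation is a single substitution. The only point that deserves explicit mention --- inherited from Theorem \ref{Theorem-Weibull-Hazard} --- is the tacit regime requirement that $0 < Kt^2 < \mu_{\hat{z}}$ so that the Chernoff parameter $\delta$ lies in $(0, 1]$ and the bound is a meaningful lower-tail deviation statement. This matches the interpretive remark already made in the paper that the bound tightens precisely when the manually-tested hazard rate deviates substantially below the expected SDP hazard count.
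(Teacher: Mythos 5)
Your proposal is correct and matches the paper's own proof essentially verbatim: the paper likewise obtains the bound by substituting $z_{nli}(t) = Kt^2$ (i.e., the Weibull hazard at $m=2$) into Theorem \ref{Theorem-Weibull-Hazard} and simplifying $e^{-\mu_{\hat{z}}[1 - Kt^2/\mu_{\hat{z}}]^2/2}$ to the stated form. Your added remark about the tacit requirement $0 < Kt^2 < \mu_{\hat{z}}$ so that $\delta \in (0,1]$ is a reasonable observation that the paper leaves implicit.
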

\begin{proof}
Substitute the value of $z(t)$ with the value $z_{nli}(t)$ (from Equation \ref{Eq: Non-Linearly Increasing Hazard}) in Theorem \ref{Theorem-Weibull-Hazard}, then we get:
\begin{align}
    Pr[X < Kt^2] < e^{\frac{-\mu_{\hat{z}}\Big[1-\frac{Kt^2}{\mu_{\hat{z}}}\Big]^2}{2}}
\end{align}
A simplification of the above equation will ensure the proof.
\end{proof}
The Corollary \ref{Corollary-Non-Linearly-Increasing-Hazard} provides evidence that, in terms of the non-linearly increasing hazard model, the probability of occurrence of fewer hazards in the software that uses SDP than the occurrence of the total hazards in the same software that is tested by humans is bound by the maximum value of $e^{\frac{-(\mu_{\hat{z}}-Kt^2)^2}{2\mu_{\hat{z}}}}$.
\subsubsection{Bounds in terms of the Reliability Model}
\label{Sec: Non-Linearly Increasing Reliability Model}
In the following lemma, we define the reliability of a software that follows a non-linearly increasing model of the hazard rate.
\begin{lemma}
\label{Lemma-Reliability-NLI}
For the non-linearly increasing function of the hazard model, $z_{nli}(t) = Kt^2, \text{ for some } K>0, \text{ and time } t>0$, its reliability is:
\begin{align}
\label{Reliability-NLI}
    R_{nli}(t) = e^{-\frac{Kt^3}{3}}
\end{align}
\end{lemma}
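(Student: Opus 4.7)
The plan is to follow the exact same template that produced Lemmas \ref{Lemma-Reliability-Wibull} and \ref{Lemma-Reliability-NLD}, namely to invoke the general identity $R(t) = e^{-\int_0^t z(x)\, dx}$ from Equation \ref{Reliability from Hazard Rate} and specialize it to the postulated non-linearly increasing hazard $z_{nli}(x) = Kx^2$. Since the reliability/hazard relation has already been established in Section \ref{Software Reliability Indices} and requires no re-derivation here, the entire content of the proof reduces to evaluating a single definite integral.

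First I would write $R_{nli}(t) = e^{-\int_0^t Kx^2\, dx}$ as the intermediate form, mirroring Equation \ref{Reliability-NLD-Inintial} in the proof of Lemma \ref{Lemma-Reliability-NLD}. Then I would pull the constant $K$ outside the integral and apply the elementary antiderivative $\int x^2\, dx = \tfrac{x^3}{3}$, evaluating between $0$ and $t$ to obtain $\tfrac{Kt^3}{3}$ in the exponent. Substituting this back yields $R_{nli}(t) = e^{-Kt^3/3}$, which is exactly the claimed form.

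There is essentially no obstacle: the argument is a one-line computation with no case analysis, no inequality manipulation, and no parameter restriction beyond $K>0$ and $t>0$ (both already assumed in the definition of $z_{nli}$). The only thing worth being careful about is keeping notation consistent with the earlier supplement lemmas — using $x$ as the dummy variable of integration and $t$ as the upper limit — so that the statement of the lemma and the intermediate display match the style of Lemmas \ref{Lemma-Reliability-Wibull} and \ref{Lemma-Reliability-NLD}. Once that is done, the proof closes immediately.
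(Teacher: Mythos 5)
Your proposal is correct and follows exactly the same route as the paper's own proof: substitute $z_{nli}(x) = Kx^2$ into the identity $R(t) = e^{-\int_0^t z(x)\,dx}$ from Equation \ref{Reliability from Hazard Rate} and evaluate $\int_0^t Kx^2\,dx = \tfrac{Kt^3}{3}$. The paper leaves the final integration as ``simplifying'' the intermediate display, so your version is, if anything, slightly more explicit; there is nothing to add or correct.
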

\begin{proof}
We assume that the hazards in software are a linearly increasing function of time. It is given as:
\begin{align*}
    z_{nli}(t) = Kt^2, \text{ for some } K>0 \text{ and time } t>0
\end{align*}
Now, substituting $z_{nli}(t)$ in $R(t)$ (from Equation \ref{Reliability from Hazard Rate}) will results in:
\begin{align}
\label{Reliability-NLI-Inintial}
    R_{nli}(t) = e^{-\int_0^t Kx^2 dx}
\end{align}
Now, simplifying Equation \ref{Reliability-NLI-Inintial} will results in the reliability of a software that has a non-linearly increasing function of the hazard rate.
\end{proof}
\begin{corollary}
\label{Corollary-Non-Linearly-Increasing-Reliability}
Let $X_1, X_2, \dots, X_l$ be the independent Bernoulli trials such that for, $1\leq i \leq l$, $Pr[X_i = 1] = p$, where $0 < p < 1$. Then for X = $\sum_{i=1}^{l} X_i, \mu_{\hat{R}} < e^{lp(e^{-t}-1)}$, and $\exists$ parameters $K>0$, time $t>0$, and for the reliability of the software that follows a non-linearly increasing hazard rate, $R_{nli}(t) = e^{\big[-\frac{Kt^3}{3}\big]}$:
\begin{align}
         Pr\Big[e^{-Xt} > e^{\big[-\frac{Kt^3}{3}\big]}\Big] < e^{\frac{-e^{lp(1-e^{-t})}\big[3e^{lp(e^{-t}-1)}-Kt^2\big]^2}{18}}
\end{align}
\end{corollary}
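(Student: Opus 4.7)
The plan is to mirror the template already established in Corollary \ref{Corollary-Non-Linear-Decreasing-Reliability} and Corollary \ref{Corollary-Linearly-Decreasing-Reliability}: reduce the upper-tail event on the reliability $e^{-Xt}$ to a lower-tail event on $X$, invoke the Chernoff lower-tail bound of Equation \ref{LowerChernoff}, and finally rewrite the expression in a form that matches the target. This corollary is in fact just the specialisation of Theorem \ref{Theorem-Weibull-Reliability} to $m=2$ (so $R_{nli}(t) = e^{-Kt^3/3}$ follows from Lemma \ref{Lemma-Reliability-NLI}), so nothing essentially new is required; the work is almost entirely algebraic.

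First I would rewrite the target event by monotonicity of $\exp$. Since $t > 0$, the inequality $e^{-Xt} > e^{-Kt^3/3}$ is equivalent to $-Xt > -Kt^3/3$, which in turn is equivalent to $X < Kt^2/3$. This is precisely the form required by the Chernoff lower-tail inequality, with the threshold $Kt^2/3$ playing the role of $(1-\delta)\mu$.

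Next I would determine $\delta$ by equating $(1-\delta)\mu = Kt^2/3$, which gives $\delta = 1 - Kt^2/(3\mu)$, and, exactly as in the two previous reliability corollaries, take $\mu$ to be the expected reliability $\mu_{\hat{R}} < e^{lp(e^{-t}-1)}$ derived in Equation \ref{Expected Reliability-SDP}. Substituting into $e^{-\mu\delta^2/2}$ yields an intermediate bound of the shape $e^{-(\mu_{\hat{R}} - Kt^2/3)^2/(2\mu_{\hat{R}})}$.

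The final step is cosmetic simplification to match the statement: pulling the factor of $3$ out of the square turns the $2$ in the denominator into $18$ and leaves $(3\mu_{\hat{R}} - Kt^2)^2$ in the numerator, and using the identity $1/\mu_{\hat{R}} = e^{-lp(e^{-t}-1)} = e^{lp(1-e^{-t})}$ (the same reciprocal trick employed in Corollary \ref{Corollary-Non-Linear-Decreasing-Reliability}) rewrites the $1/\mu_{\hat{R}}$ prefactor in the positive-exponent form that appears in the corollary. There is no genuine obstacle here; the only place where care is required is keeping the direction of the inequality straight when stripping the exponentials in the first step, and remembering that the bound on $\mu_{\hat{R}}$ is a strict upper bound which is preserved through the Chernoff substitution precisely because the exponent of the Chernoff bound is decreasing in $\mu$ once $\delta$ has been eliminated in favour of the fixed threshold $Kt^2/3$.
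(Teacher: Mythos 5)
Your proposal follows the paper's proof essentially verbatim: the same reduction of $Pr\big[e^{-Xt}>e^{-Kt^3/3}\big]$ to $Pr\big[X<\frac{Kt^2}{3}\big]$, the same choice $\delta = 1 - \frac{Kt^2}{3\mu_{\hat{R}}}$, the same substitution into the Chernoff lower-tail bound, and the same algebraic clean-up producing the $18$ in the denominator and the reciprocal rewriting of $\mu_{\hat{R}}$. (Your closing monotonicity remark actually points the wrong way --- $e^{-(\mu-c)^2/(2\mu)}$ is decreasing in $\mu$ for $\mu>c$, so replacing $\mu_{\hat{R}}$ by its upper bound $e^{lp(e^{-t}-1)}$ shrinks the expression rather than enlarging it --- but the paper performs the identical substitution without comment, so this does not distinguish your argument from the paper's own.)
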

\begin{proof}
The proof for this upper tail is very similar to the proof for the lower tail, as we saw in Corollary \ref{Corollary-Non-Linearly-Increasing-Hazard}. As before,
\begin{align}
\label{NLI-Reliability-Formating}
     Pr\Big[e^{-Xt} > e^{\big[-\frac{Kt^3}{3}\big]}\Big] = Pr\Big[Xt < \frac{Kt^3}{3}\Big] = Pr\Big[X < \frac{Kt^2}{3}\Big]
\end{align}
Now, we wish to obtain a tight lower bound that the random variable, $X$, deviates far from the value $\frac{Kt^2}{3}$. Here, for some $K>0$ and $t>0$, the value $\frac{Kt^2}{3}$ is assumed to be below the expectation, $\mu_{\hat{R}}$, in a given time period $[0,t]$. Now, equating the $(1-\delta)\mu$ in \ref{LowerChernoff} with $\frac{Kt^2}{3}$ in \ref{NLI-Reliability-Formating}, then we have:
\begin{align*}
    (1-\delta)\mu = \frac{Kt^2}{3}
\end{align*}
\begin{align}
\label{NLI-Reliability-Delta}
    \Rightarrow \delta = 1 - \frac{Kt^2}{3\mu}
\end{align}
Since, we are finding the tight upper bound (we eventually converted it into finding the tight lower bound of some other form) for the deviation of a random variable ($e^{-Xt}$) from the reliability of a manually tested software, in the above equation, we will replace $\mu$ with the expected reliability $\mu_{\hat{R}}$ of the software that is tested by using the SDP models. Now, substitute the value of $\delta$ (from Equation \ref{NLI-Reliability-Delta}) in Equation \ref{LowerChernoff} to find the tight upper bound for the deviation of a random variable $X$ far from the value $\frac{Kt^2}{3}$. This is given as: 
\begin{align}
    Pr\Big[X < \frac{Kt^2}{3}\Big] < e^{\frac{-\mu_{\hat{R}}\big[1-\frac{Kt^2}{3\mu_{\hat{R}}}\big]^2}{2}}
\end{align}
Few steps of simplification will lead to: 
\begin{align}
         Pr\Big[e^{-Xt} > e^{\big[-\frac{Kt^3}{3}\big]}\Big] < e^{-\frac{[3\mu_{\hat{R}}-Kt^2]^2}{18\mu_{\hat{R}}}}
\end{align}
Substituting the expected reliability $\mu_{\hat{R}} = e^{lp(e^{-t}-1)}$ in the above equation will ensure the proof.
\end{proof}
The Corollary \ref{Corollary-Non-Linearly-Increasing-Reliability} provides evidence that, in terms of reliability of a software that is calculated from a non-linearly increasing hazard model, the probability of obtaining better reliability in the software that is tested by using the SDP model than the reliability of the same software that is tested by humans is bounded by the maximum value of $e^{\frac{-e^{lp(1-e^{-t})}\big[3e^{lp(e^{-t}-1)}-Kt^2\big]^2}{18}}$.

\subsection{The Linearly Increasing Hazard Model}
\label{The Linearly Increasing Model}
In general, when wear, deterioration, or feature upgradation occur in the system, the hazards will increase as the time passes \cite{pressman2005software, hartz1997introduction}. For example, when feature upgrades occur in the system, since the software experiences an enhancement in functionality, the complexity of the software is likely to increase. Consequently, it is likely to incur hazards in the system due to the unintended injection of the defective modules \cite{lyu1996handbook}. The simplest increasing hazard model that can be postulated is one in which the hazards in a system increase linearly with time. The following sub-sections provide the proofs that are based on the hazard model that follows the linearly increasing hazard model.

In Section \ref{Sec: Linearly Increasing Hazard Model}, we provide a proof for the possibility of achieving minimum hazards in a software that uses SDP compared to the hazards in a manually tested software. Similarly, in Section \ref{Sec: Linearly Increasing Reliability Model} we provide a proof for the possibility of achieving maximum reliability for a software that uses SDP, when compared with the reliability of a manually tested software.
\subsubsection{Bounds in terms of the Hazard Model}
\label{Sec: Linearly Increasing Hazard Model}
\begin{definition}
For any manually tested software system, and for any $K>0$, the hazards that follow a linearly increasing function of time are defined as:
\begin{align}
\label{Eq: Linearly Increasing Hazard}
    z_{li}(t) = Kt, \text{ for some } K>0 \text{ and time } t>0
\end{align}
\end{definition}

Now, the following corollary defines the deviation of a random variable $X$ below the value $Kt$.
\begin{corollary}
\label{Corollary-Linearly-Increasing-Hazard}
Let $X_1, X_2, \dots, X_l$ be the independent Bernoulli trials such that for, $1\leq i \leq l$, $Pr[X_i = 1] = p$, where $0 < p < 1$. Then for X = $\sum_{i=1}^{l} X_i, \mu_{\hat{z}} = \mathbb{E}[X] = \sum_{i=1}^{l} p = lp$, and $\exists$ parameters $K>0$, time $t>0$, and for the linearly increasing hazard rate $z_{li} = Kt$:
\begin{align}
    Pr[X < Kt] < e^{\frac{-(\mu_{\hat{z}}-Kt)^2}{2\mu_{\hat{z}}}}
\end{align}
\end{corollary}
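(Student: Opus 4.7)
The plan is to recognize that the linearly increasing hazard rate $z_{li}(t) = Kt$ is precisely the Weibull hazard $z(t) = Kt^m$ of Definition \ref{Definition: Weibull Hazard} specialized to $m = 1$ (which satisfies $m > -1$). Once this identification is made, the entire machinery of Theorem \ref{Theorem-Weibull-Hazard} applies verbatim, and the desired bound follows by direct substitution. This mirrors exactly the pattern already used in Corollaries \ref{Corollary-Non-Linear-Decreasing-Hazard}, \ref{Corollary-Linearly-Decreasing-Hazard}, and \ref{Corollary-Non-Linearly-Increasing-Hazard}, each of which proceeds by replacing $z(t)$ in Theorem \ref{Theorem-Weibull-Hazard} with the specific hazard form under consideration.

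First, I would invoke Theorem \ref{Theorem-Weibull-Hazard} with parameter choice $m = 1$, yielding $Pr[X < Kt] < e^{\frac{-(\mu_{\hat{z}} - Kt)^2}{2\mu_{\hat{z}}}}$ immediately. If a self-contained derivation is preferred for parallelism with the other corollaries, I would instead re-run the Chernoff lower-tail calculation directly: equate $(1-\delta)\mu$ with $Kt$ from Equation \ref{Eq: Linearly Increasing Hazard} to obtain $\delta = 1 - \frac{Kt}{\mu}$, replace $\mu$ with the expected hazard rate $\mu_{\hat{z}} = lp$ derived in Equation \ref{Expectation}, and then substitute this $\delta$ into the Chernoff lower-tail bound (Equation \ref{LowerChernoff}). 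A short algebraic simplification of the resulting exponent $-\tfrac{\mu_{\hat{z}}}{2}\bigl[1 - \tfrac{Kt}{\mu_{\hat{z}}}\bigr]^2$ yields the stated form $\tfrac{-(\mu_{\hat{z}} - Kt)^2}{2\mu_{\hat{z}}}$.

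The main obstacle is essentially cosmetic rather than mathematical: one must ensure that $Kt$ indeed sits below the expectation $\mu_{\hat{z}}$ in the relevant time interval $[0, t]$, since the Chernoff lower tail requires a deviation band $0 < \delta \le 1$, i.e. $Kt < \mu_{\hat{z}} = lp$. This is the same regime assumed throughout Section \ref{The Proofs} (the value $Kt$ is treated as the left marginal arbitrary constant from the expectation), so no new assumption is introduced. Given this, the proof reduces to a one-line substitution into Theorem \ref{Theorem-Weibull-Hazard} followed by standard algebraic cleanup, and no genuinely new technical idea is required.
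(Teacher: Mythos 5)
Your proposal matches the paper's own proof: the paper likewise obtains this corollary by substituting $z_{li}(t)=Kt$ (i.e., the Weibull form with $m=1$) into Theorem \ref{Theorem-Weibull-Hazard}, arriving at $Pr[X<Kt]<e^{\frac{-\mu_{\hat{z}}\big[1-\frac{Kt}{\mu_{\hat{z}}}\big]^2}{2}}$ and simplifying. Your additional remark that one needs $Kt<\mu_{\hat{z}}$ so that $0<\delta\le 1$ is the same standing assumption the paper makes throughout Section \ref{The Proofs}, so nothing new is required.
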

\begin{proof}
Substitute the value of $z(t)$ with the value $z_{li}(t)$ (from Equation \ref{Eq: Linearly Increasing Hazard}) in Theorem \ref{Theorem-Weibull-Hazard}, then we get:
\begin{align}
    Pr[X < Kt] < e^{\frac{-\mu_{\hat{z}}\Big[1-\frac{Kt}{\mu_{\hat{z}}}\Big]^2}{2}}
\end{align}
Simplifying the above equation will result in the final form of the bound.
\end{proof}
The Corollary \ref{Corollary-Linearly-Increasing-Hazard} shows that, for a linearly increasing hazard model, the probability of occurrence of less hazards in the software that uses SDP, than the occurrence of the total hazards in the same software, that is tested by the human, is bounded by the maximum value of $e^{\frac{-(\mu_{\hat{z}}-Kt)^2}{2\mu_{\hat{z}}}}$. If the numerator in the exponent approaches infinite, then we get more sharp bounds.
\subsubsection{Bounds in terms of the Reliability Model}
\label{Sec: Linearly Increasing Reliability Model}
\begin{lemma}
\label{Lemma-Reliability-LI}
For the linearly increasing function of the hazard model $z_{li}(t) = Kt, \text{ for some } K>0, \text{ and time } t>0$, its reliability is:
\begin{align}
\label{Reliability-LI}
    R_{li}(t) = e^{-\frac{Kt^2}{2}}
\end{align}
\end{lemma}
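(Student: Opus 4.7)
The approach mirrors the template already used for Lemmas \ref{Lemma-Reliability-Wibull}, \ref{Lemma-Reliability-NLD}, \ref{Lemma-Reliability-LD}, and \ref{Lemma-Reliability-NLI}: I simply plug the hypothesised hazard function into the general reliability identity derived in Equation \ref{Reliability from Hazard Rate}, namely $R(t) = e^{-\int_0^t z(x)\,dx}$, and carry out the resulting elementary integral. Since the hazard rate here is $z_{li}(x) = Kx$, a degree-one polynomial in $x$, there is no difficulty in integrating; everything reduces to the power rule.

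Concretely, the first step is to write $R_{li}(t) = e^{-\int_0^t Kx\,dx}$ by direct substitution of $z_{li}$ into Equation \ref{Reliability from Hazard Rate}. The second step is to pull the constant $K$ out of the integral and evaluate $\int_0^t x\,dx = t^2/2$, yielding $\int_0^t Kx\,dx = Kt^2/2$. The final step is to substitute this value back into the exponent to obtain $R_{li}(t) = e^{-Kt^2/2}$, which is precisely the form claimed in the lemma statement.

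There is no genuine obstacle in this proof; it is the same routine calculation as in the preceding reliability lemmas and serves mainly to record, in closed form, the reliability expression that the upper-bound corollary for the linearly increasing hazard model will later require. The only mild point of care is to keep the domain assumptions $K > 0$ and $t > 0$ in view so that the integrand is well-defined on $[0, t]$ and the resulting exponent is negative, ensuring $R_{li}(t) \in (0, 1)$ as a probability of failure-free operation should be.
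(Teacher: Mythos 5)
Your proposal is correct and follows exactly the same route as the paper: substitute $z_{li}(x)=Kx$ into Equation \ref{Reliability from Hazard Rate} and evaluate the elementary integral $\int_0^t Kx\,dx = Kt^2/2$. You merely make explicit the integration step that the paper leaves as ``simplifying Equation \ref{Reliability-LI-Inintial}''.
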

\begin{proof}
We assume that the hazards in software are a linearly increasing function of time. It is given as:
\begin{align*}
    z_{li}(t) = Kt, \text{ for some } K>0 \text{ and time } t>0
\end{align*}
Now, substituting $z_{li}(t)$ in $R(t)$ (from Equation \ref{Reliability from Hazard Rate}) will results in:
\begin{align}
\label{Reliability-LI-Inintial}
    R_{li}(t) = e^{-\int_0^t Kx dx}
\end{align}
Now, simplifying Equation \ref{Reliability-LI-Inintial} will result in the reliability of a software.
\end{proof}

\begin{corollary}
\label{Corollary-Linearly-Increasing-Reliability}
Let $X_1, X_2, \dots, X_l$ be the independent Bernoulli trials such that for, $1\leq i \leq l$, $Pr[X_i = 1] = p$, where $0 < p < 1$. Then for X = $\sum_{i=1}^{l} X_i, \mu_{\hat{R}} < e^{lp(e^{-t}-1)}$, and $\exists$ parameters $K>0$, time $t>0$, and for the reliability of the software that follows a linearly increasing hazard rate, $R_{li}(t) = e^{-\frac{Kt^2}{2}}$:
\begin{align}
         Pr\Big[e^{-Xt} > e^{-\frac{Kt^2}{2}}\Big] =  Pr\Big[X < \frac{Kt}{2}\Big] < e^{-\frac{[2e^{lp(e^{-t}-1)}-Kt]^2}{8e^{lp(e^{-t}-1)}}}
\end{align}
\end{corollary}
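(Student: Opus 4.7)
The plan is to mirror the proof templates used for Corollaries \ref{Corollary-Linearly-Decreasing-Reliability} and \ref{Corollary-Non-Linearly-Increasing-Reliability}, since the structure of the statement is identical to those: an upper-tail deviation on $e^{-Xt}$ above the manually-tested reliability $R_{li}(t)=e^{-Kt^2/2}$ is converted into a lower-tail deviation on $X$ below a deterministic value, and then the Chernoff lower-tail bound from Equation \ref{LowerChernoff} is applied with the expected reliability $\mu_{\hat{R}}$ taking the role of $\mu$.

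First, I would perform the exponential-to-linear reduction: since $e^{-xt}$ is strictly decreasing in $x$ for fixed $t>0$, the event $\{e^{-Xt}>e^{-Kt^2/2}\}$ is the same as $\{-Xt > -Kt^2/2\}$, which rearranges to $\{X < Kt/2\}$. This matches the second equality stated in the corollary and sets up a clean lower-tail event on the Bernoulli sum $X$. Next, I would identify the band parameter $\delta$ by equating the deviated value $(1-\delta)\mu$ with $Kt/2$, giving $\delta = 1 - \frac{Kt}{2\mu}$. Following the convention of the earlier corollaries, I would then substitute $\mu = \mu_{\hat{R}}$, invoking the fact (established in Equation \ref{Expected Reliability-SDP}) that $\mu_{\hat{R}} < e^{lp(e^{-t}-1)}$ serves as the effective centering value for the Chernoff argument applied in this transformed domain.

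Then I would plug $\delta = 1 - \frac{Kt}{2\mu_{\hat{R}}}$ into the Chernoff lower-tail inequality $Pr[X<(1-\delta)\mu] < e^{-\mu\delta^2/2}$ to obtain
\begin{equation*}
Pr\Big[X < \tfrac{Kt}{2}\Big] < \exp\!\left(-\frac{\mu_{\hat{R}}\bigl[1-\tfrac{Kt}{2\mu_{\hat{R}}}\bigr]^2}{2}\right).
\end{equation*}
A short algebraic simplification (pulling the factor $1/(2\mu_{\hat{R}})$ out of the square and combining with the leading $\mu_{\hat{R}}$) converts the exponent into $-\tfrac{[2\mu_{\hat{R}}-Kt]^2}{8\mu_{\hat{R}}}$. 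Finally, I would substitute $\mu_{\hat{R}} = e^{lp(e^{-t}-1)}$ to arrive at the bound $e^{-[2e^{lp(e^{-t}-1)}-Kt]^2/(8e^{lp(e^{-t}-1)})}$ as stated.

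The proof is essentially a mechanical specialisation of Theorem \ref{Theorem-Weibull-Reliability} to the case $m=1$, so there is no serious mathematical obstacle. The only point requiring care is the algebraic rearrangement inside the exponent: I must avoid sign errors when pulling $\frac{1}{2\mu_{\hat{R}}}$ out of the squared bracket and when substituting the explicit form of $\mu_{\hat{R}}$, because the inner expression $2\mu_{\hat{R}}-Kt$ must remain squared (so its sign is irrelevant), but the outer factor $\mu_{\hat{R}}$ in the denominator must stay positive. As in the preceding corollaries, I would close with a brief remark that sharpness of the bound is governed by the gap $|2e^{lp(e^{-t}-1)}-Kt|$, so that large deviations between the expected SDP reliability and the manual-testing reliability tighten the bound, driving the probability towards zero when the SDP model underperforms human testers.
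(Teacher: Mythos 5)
Your proposal is correct and follows essentially the same route as the paper's own proof: the same reduction of the upper-tail event on $e^{-Xt}$ to the lower-tail event $\{X < Kt/2\}$, the same choice $\delta = 1 - \frac{Kt}{2\mu_{\hat{R}}}$, the same application of the Chernoff lower-tail inequality, and the same algebraic simplification to $e^{-[2\mu_{\hat{R}}-Kt]^2/(8\mu_{\hat{R}})}$ followed by the substitution $\mu_{\hat{R}} = e^{lp(e^{-t}-1)}$. You also correctly identify it as the $m=1$ specialisation of Theorem \ref{Theorem-Weibull-Reliability}, and you faithfully reproduce the paper's convention of centering the Chernoff bound at $\mu_{\hat{R}}$ rather than at $\mathbb{E}[X]$.
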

\begin{proof}
The proof for this upper tail is very similar to the proof for the lower tail, as we saw in Corollary \ref{Corollary-Linearly-Increasing-Hazard}. As before,
\begin{align}
\label{LI-Reliability-Formating}
     Pr\Big[e^{-Xt} > e^{-\frac{Kt^2}{2}}\Big] = Pr\Big[Xt < \frac{Kt^2}{2}\Big] = Pr\Big[X < \frac{Kt}{2}\Big]
\end{align}
Now, we wish to obtain a tight lower bound that the random variable, $X$, deviates far from the value $\frac{Kt}{2}$. Here, for some $K>0$ and $t>0$, the value $\frac{Kt}{2}$ is assumed to be below the expectation, $\mu_{\hat{R}}$, in a given time period $[0,t]$. Now, equating the $(1-\delta)\mu$ in \ref{LowerChernoff} with $\frac{Kt}{2}$ in \ref{LI-Reliability-Formating}, then we have:
\begin{align*}
    (1-\delta)\mu = \frac{Kt}{2}
\end{align*}
\begin{align}
\label{LI-Reliability-Delta}
    \Rightarrow \delta = 1 - \frac{Kt}{2\mu}
\end{align}
Since, we are finding the tight upper bound (we eventually converted it into finding the tight lower bound of some other form) for the deviation of a random variable ($e^{-Xt}$) from the reliability of a manually tested software, in the above equation, we will replace $\mu$ with the expected reliability $\mu_{\hat{R}}$ of the software that is tested by using the SDP models. Now, substitute the value of $\delta$ (from Equation \ref{LI-Reliability-Delta}) in Equation \ref{LowerChernoff} to find the tight upper bound for the deviation of a random variable $X$ far from the value $\frac{Kt}{2}$. This is given as: 
\begin{align}
    Pr\Big[X < \frac{Kt}{2}\Big] < e^{\frac{-\mu_{\hat{R}}\big[1-\frac{Kt}{2\mu_{\hat{R}}}\big]^2}{2}}
\end{align}
Simplification of the above equation will lead to: 
\begin{align}
         Pr\Big[e^{-Xt} > e^{-\frac{Kt^2}{2}}\Big] < e^{-\frac{[2\mu_{\hat{R}}-Kt]^2}{8\mu_{\hat{R}}}}
\end{align}
Substituting the expected reliability $\mu_{\hat{R}} = e^{lp(e^{-t}-1)}$ into the above equation will result in the end of the proof.
\end{proof}
The Corollary \ref{Corollary-Linearly-Increasing-Reliability} provides evidence that, in terms of the reliability of a software that is calculated from a linearly increasing hazard model, the probability of obtaining better reliability in the software that is tested by using the SDP model than the reliability of the same software that is tested by humans is bounded by the maximum value of $e^{-\frac{[2e^{lp(e^{-t}-1)}-Kt]^2}{8e^{lp(e^{-t}-1)}}}$.

\subsection{The Constant Hazard Model}
\label{The Constant Hazard Model}
The region III of the revised bath-tub curve (from Figure \ref{RevisedBathTub}) for software reliability indicates the constant hazard rate in the system. That is, the software does not have any feature upgrades like the hardware does. In such a case, the system may approach obsolescence, indicating there are not enough incentives to upgrade it. As a result, the system does not produce any new failures. Hence, the hazard rate in the system remains constant over an infinite time period \cite{hartz1997introduction}. Even though the software does not accept any new upgrades, if it is in use, it does not produce any new failures (assuming the software always contains some fixed number of harmless failure incidents). In such a case, the hazard rate in the system will be some arbitrary constant, which is represented as $\lambda$ \cite{lyu1996handbook} \cite{hartz1997introduction}.

In Section \ref{Sec: Constant Hazard Model}, we provide a proof for the possibility of achieving minimum hazards in a software that uses SDP compared to the hazards in a manually tested software. Similarly, in Section \ref{Sec: Constant Reliability Model}, we provide a proof for the possibility of achieving maximum reliability for a software that uses SDP when compared with the reliability of a manually tested software. In both cases, the hazard rate in the system is assumed to be a constant value, and hence, it is independent of time.
\subsubsection{Bounds on the Hazard Model}
\label{Sec: Constant Hazard Model}
\begin{definition}
The constant hazards model is defined for any manually tested software system as follows:
\begin{align}
\label{Eq: Constant Hazard}
    z_{c}(t) = \lambda
\end{align}
\end{definition}

Now, the following corollary defines the deviation of a random variable $X$ below its expectation $\mu_{\hat{z}}$.
\begin{corollary}
\label{Corollary-Constant-Hazard}
Let $X_1, X_2, \dots, X_l$ be the independent Bernoulli trials such that for, $1\leq i \leq l$, $Pr[X_i = 1] = p$, where $0 < p < 1$. Then for X = $\sum_{i=1}^{l} X_i, \mu_{\hat{z}} = \mathbb{E}[X] = \sum_{i=1}^{l} p = lp$, and for the constant hazard rate $z_c = \lambda$:
\begin{align}
    Pr[X < \lambda] < e^{\frac{-(\mu_{\hat{z}}-\lambda)^2}{2\mu_{\hat{z}}}}
\end{align}
\end{corollary}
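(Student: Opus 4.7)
The plan is to obtain this bound as an immediate specialisation of Theorem \ref{Theorem-Weibull-Hazard}, exactly in the spirit of Corollaries \ref{Corollary-Linearly-Increasing-Hazard} and \ref{Corollary-Non-Linearly-Increasing-Hazard}. The key observation is that the constant hazard model $z_c(t) = \lambda$ is nothing other than the Weibull hazard $z(t) = Kt^m$ at the parameter choice $K = \lambda$ and $m = 0$, which lies inside the admissible range $K > 0$, $m > -1$ required by the master theorem. Thus no fresh Chernoff-style derivation is needed; the proof reduces to a substitution.

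First I would note that all hypotheses of Theorem \ref{Theorem-Weibull-Hazard} carry over verbatim: the $X_i$ are independent Bernoulli($p$), $X = \sum_{i=1}^{l} X_i$, and $\mu_{\hat{z}} = lp$. Next I would substitute $Kt^m \mapsto \lambda$ directly in the conclusion of Theorem \ref{Theorem-Weibull-Hazard} to obtain
\begin{align*}
Pr[X < \lambda] < e^{\frac{-\mu_{\hat{z}}\bigl[1 - \frac{\lambda}{\mu_{\hat{z}}}\bigr]^2}{2}} .
\end{align*}
Finally I would expand $\mu_{\hat{z}}\bigl(1 - \lambda/\mu_{\hat{z}}\bigr)^2 = (\mu_{\hat{z}} - \lambda)^2/\mu_{\hat{z}}$ in the exponent, which produces the stated closed form $e^{-(\mu_{\hat{z}} - \lambda)^2/(2\mu_{\hat{z}})}$.

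The only real obstacle is bookkeeping around the domain of validity implicit in Equation \ref{LowerChernoff}: the Chernoff lower-tail inequality is stated for $0 < \delta \leq 1$, so I would want to make explicit that the corollary is meaningful precisely when $\lambda$ lies strictly below $\mu_{\hat{z}} = lp$, so that $\delta = 1 - \lambda/\mu_{\hat{z}} \in (0,1]$. This matches the scenario the paper is after, namely the regime in which the manually tested software (operating at the best arbitrary constant hazard $\lambda$) has fewer hazards than the expected hazard count of the SDP-based software; and it is also consistent with the handling of $K$, $m$, $t$ in the previous corollaries. Once this domain condition is acknowledged, the substitution and algebraic simplification above complete the proof without any additional probabilistic work.
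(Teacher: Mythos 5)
Your proposal is correct and follows essentially the same route as the paper: the paper's own proof also obtains the bound by substituting the constant hazard $z_c(t)=\lambda$ (the Weibull case $K=\lambda$, $m=0$) into Theorem \ref{Theorem-Weibull-Hazard} and simplifying $\mu_{\hat{z}}\bigl(1-\lambda/\mu_{\hat{z}}\bigr)^2=(\mu_{\hat{z}}-\lambda)^2/\mu_{\hat{z}}$. Your explicit remark that the bound is meaningful only when $\lambda<\mu_{\hat{z}}$ (so that $\delta\in(0,1]$) is a point the paper leaves implicit, and is a welcome clarification rather than a deviation.
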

\begin{proof}
Substituting the value of $z(t)$ with the value of $z_{c}(t)$ in Theorem \ref{Theorem-Weibull-Hazard}, then we get:
\begin{align}
    Pr[X < \lambda] < e^{\frac{-\mu_{\hat{z}}\big[1-\frac{\lambda}{\mu_{\hat{z}}}\big]^2}{2}}
\end{align}
A few steps of simplification will lead to the end of the proof.
\end{proof}
The Corollary \ref{Corollary-Constant-Hazard} shows that, in terms of a constant hazard model, the probability of occurrence of fewer hazards in the software that uses SDP than the occurrence of the total hazards in the same software that is tested by humans is bounded by the maximum value of
$e^{\frac{-(\mu_{\hat{z}}-\lambda)^2}{2\mu_{\hat{z}}}}$.
\subsubsection{Bounds on the Reliability Model}
\label{Sec: Constant Reliability Model}
\begin{lemma}
\label{Lemma-Reliability-C}
For the constant hazard model $z_{c}(t) = \lambda$, its reliability is:
\begin{align}
\label{Reliability-C}
    R_{c}(t) = e^{-\lambda t}
\end{align}
\end{lemma}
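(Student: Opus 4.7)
The plan is to follow exactly the same template used in Lemmas \ref{Lemma-Reliability-Wibull}, \ref{Lemma-Reliability-NLD}, \ref{Lemma-Reliability-LD}, \ref{Lemma-Reliability-NLI}, and \ref{Lemma-Reliability-LI}: invoke the general hazard-to-reliability relation given in Equation \ref{Reliability from Hazard Rate}, substitute the specific functional form of the hazard rate, and evaluate the resulting definite integral. First I would write $R_c(t) = e^{-\int_0^t z_c(x)\, dx}$, and then substitute $z_c(x) = \lambda$ into the integrand.

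Since $\lambda$ does not depend on the integration variable $x$, the definite integral $\int_0^t \lambda\, dx$ reduces immediately to $\lambda t$. Placing this value into the exponent of the reliability expression yields $R_c(t) = e^{-\lambda t}$, which matches the claimed form.

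The main obstacle is essentially non-existent; the argument is a one-step substitution followed by elementary integration of a constant. The only subtlety worth flagging is interpretive rather than technical: the constant-hazard assumption is equivalent to assuming that the underlying time-to-failure random variable is exponentially distributed (and hence memoryless). This is consistent with the narrative of Section \ref{The Constant Hazard Model}, in which the software sits in region III of the revised bath-tub curve of Figure \ref{RevisedBathTub}, experiences no new feature upgrades or deterioration during $[0,t]$, and therefore does not accumulate any time-dependent structure beyond the fixed rate $\lambda$ already fixed by the manually tested baseline.
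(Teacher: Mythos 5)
Your proposal is correct and follows exactly the same route as the paper: substitute $z_c(x)=\lambda$ into the relation $R(t)=e^{-\int_0^t z(x)\,dx}$ from Equation \ref{Reliability from Hazard Rate} and evaluate the trivial integral $\int_0^t \lambda\,dx=\lambda t$. The extra remark about the exponential/memoryless interpretation is a fine aside but not part of the paper's argument.
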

\begin{proof}
We assume that at any time, $t>0$, the number of hazards in a software is a constant value. It is given as:
\begin{align*}
    z_{c}(t) = \lambda
\end{align*}
Now, substituting $z_{c}(t)$ in $R(t)$ (from Equation \ref{Reliability from Hazard Rate}) will results in:
\begin{align}
\label{Reliability-C-Inintial}
    R_{c}(t) = e^{-\int_0^t \lambda dx}
\end{align}
where $R_c(t)$ is the reliability of the software that follows a constant hazard model. Now, simplifying Equation \ref{Reliability-C-Inintial} will lead to the reliability of a software that has a constant hazard model.
\end{proof}

\begin{corollary}
\label{Corollary-Constant-Reliability}
Let $X_1, X_2, \dots, X_l$ be the independent Bernoulli trials such that for, $1\leq i \leq l$, $Pr[X_i = 1] = p$, where $0 < p < 1$. Then for X = $\sum_{i=1}^{l} X_i, \mu_{\hat{R}} < e^{lp(e^{-t}-1)}$, and for the reliability of the software that follows a constant hazard rate, $R_{c}(t) = e^{-\lambda t}$:
\begin{align}
    Pr\Big[e^{-Xt} > e^{-\lambda t}\Big] < e^{\frac{-(e^{lp(e^{-t}-1)}-\lambda)^2}{2e^{lp(e^{-t}-1)}}}
\end{align}
\end{corollary}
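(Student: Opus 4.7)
The plan is to follow the same template used in Corollaries \ref{Corollary-Linearly-Decreasing-Reliability}, \ref{Corollary-Non-Linearly-Increasing-Reliability}, and \ref{Corollary-Linearly-Increasing-Reliability}: convert the ``upper tail on the reliability'' statement into an equivalent ``lower tail on $X$'' statement, and then apply the lower-tail Chernoff inequality (\ref{LowerChernoff}) with the paper's convention of replacing the generic mean $\mu$ by the expected reliability $\mu_{\hat{R}}$.

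First I would rewrite the event. Since the exponential function is monotonically increasing, we have
\begin{equation*}
    \Pr\!\Big[e^{-Xt} > e^{-\lambda t}\Big] \;=\; \Pr\!\big[-Xt > -\lambda t\big] \;=\; \Pr[X < \lambda],
\end{equation*}
so the corollary reduces to bounding the probability that $X$ falls below the constant hazard value $\lambda$ (which, in the regime of interest, is assumed to lie below the expectation $\mu_{\hat{R}}$).

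Next I would set up the band. In the Chernoff form $(1-\delta)\mu$ used in (\ref{LowerChernoff}), equate $(1-\delta)\mu = \lambda$ and solve $\delta = 1 - \lambda/\mu$, then (following the convention established in Theorem \ref{Theorem-Weibull-Reliability} and repeated in all four previous reliability corollaries) replace the generic $\mu$ by the expected reliability $\mu_{\hat{R}}$ so that $\delta = 1 - \lambda/\mu_{\hat{R}}$. Plugging this into (\ref{LowerChernoff}) gives
\begin{equation*}
    \Pr[X < \lambda] \;<\; \exp\!\left(-\,\frac{\mu_{\hat{R}}}{2}\Big[1 - \tfrac{\lambda}{\mu_{\hat{R}}}\Big]^{2}\right) \;=\; \exp\!\left(-\,\frac{(\mu_{\hat{R}} - \lambda)^{2}}{2\,\mu_{\hat{R}}}\right),
\end{equation*}
and substituting $\mu_{\hat{R}} = e^{lp(e^{-t}-1)}$ from Equation (\ref{Expected Reliability-SDP}) yields exactly the claimed bound.

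Because the hazard is a pure constant $\lambda$, no integration or time-dependence is needed in the transformation step, so this proof is the cleanest of the supplementary corollaries; the entire argument is essentially a one-line algebraic rewrite followed by a direct appeal to (\ref{LowerChernoff}). There is no real obstacle beyond making sure the substitution $\mu \mapsto \mu_{\hat{R}}$ is declared consistently with the convention used throughout Section \ref{Supplements}, and noting that the corollary tacitly requires $\lambda < \mu_{\hat{R}}$ so that $\delta \in (0,1]$ and the Chernoff lower tail is applicable.
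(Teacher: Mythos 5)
Your proposal matches the paper's own proof essentially line for line: the paper likewise reduces $\Pr[e^{-Xt} > e^{-\lambda t}]$ to $\Pr[X < \lambda]$ via monotonicity, invokes the bound already derived in Corollary \ref{Corollary-Constant-Hazard}, and then substitutes $\mu_{\hat{R}} = e^{lp(e^{-t}-1)}$ in place of $\mu_{\hat{z}}$ to obtain the stated inequality. Your added remark that the argument tacitly requires $\lambda < \mu_{\hat{R}}$ so that $\delta \in (0,1]$ is a point the paper leaves implicit, but it does not change the route.
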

\begin{proof}
The proof for this upper tail is very similar to the proof for the lower tail as we saw in Corollary \ref{Corollary-Constant-Hazard}. From the Lemmas \ref{Lemma-Reliability-X} and \ref{Lemma-Reliability-C}, as before,
\begin{align}
\label{C-Reliability-Formating}
     Pr\Big[e^{-Xt} > e^{-\lambda t}\Big] = Pr\Big[Xt < \lambda t\Big] = Pr\Big[X < \lambda\Big]
\end{align}
The proof follows from Corollary \ref{Corollary-Constant-Hazard} but, at the expected reliability of $\mu_{\hat{R}} = e^{lp(e^{-t}-1)}$. From the proof of Corollary \ref{Corollary-Constant-Hazard}, we have that:
\begin{align*}
    Pr[X < \lambda] < e^{\frac{-(\mu_{\hat{z}}-\lambda)^2}{2\mu_{\hat{z}}}}
\end{align*}
Substituting the expected reliability $\mu_{\hat{R}}$ in place of $\mu_{\hat{z}}$ in the above equation will result in the tight lower bound.
\end{proof}
The Corollary \ref{Corollary-Constant-Reliability} provides evidence that, in terms of the reliability of a software that is calculated from a constant hazard model, the probability of obtaining better reliability in the software that is tested by using the SDP model than the reliability of the same software that is tested by humans is bounded by the maximum value of $e^{\frac{-(e^{lp(e^{-t}-1)}-\lambda)^2}{2e^{lp(e^{-t}-1)}}}$.
\section{Discussion}
\label{Discussion}
The proofs provided in Sections \ref{The Proofs} and \ref{Supplements} are based on the assumption that the random variable takes a value in the set, \{0,1\}, indicating that any defective component (a module), if that is predicted as clean, may cause a single failure in the system. Extending this assumption, in Section \ref{On the Definition of the Random Variable}, we discuss the chances of plausible amendments to the definition of the random variable. Precisely, we define some more bounds if each dormant defective module generates a specific form of Weibull distribution of the hazards instead of it generates a single failure in the software system.

Also, in Sections \ref{The Proofs} and \ref{Supplements}, we have provided various proofs for the deviation of the random variable (represented in terms of both hazard rate and the reliability of a software that is tested by using the SDP models) that deviates far from the actual values (represented in terms of both hazard rate and the reliability of a manually tested software). Now, in Section \ref{Analysis of the Theorems 1 to 4}, we analyse the theorems in light of the parameters used in the constructed proofs. 

%The proofs constructed in Sections \ref{The Proofs} and \ref{Supplements} are based on the assumptions that we stated in Section \ref{Preliminaries}. In Section \ref{On the Posed Assumptions}, we provide insights in relaxing the stated assumptions.
% and Section \ref{Analysis of the Corollaries 1-10} discusses the range of values that the bounds of the Corollaries \ref{Corollary-Non-Linear-Decreasing-Hazard} to \ref{Corollary-Constant-Reliability} take.
\subsection{On the Definition of the Random Variable}
\label{On the Definition of the Random Variable}
All the proofs provided in Sections \ref{The Proofs} and \ref{Supplements} are based on the assumption that the random variable takes the value of either 1 or 0. In other words, if the random variable takes the value 1, it indicates that a false negative module (that contains a dormant defect) can cause a single failure in the system. Extending this, in this section, we provide the basis to derive numerous proofs based on the assumption that the random variable takes a value based on the function of the parameters such as $\hat{K}, \hat{m}$, and $t$, instead of simply taking either 1 or 0. The following assumption ensures the same meaning:
\begin{assumption}
\label{Assumption10}
In any software, each misclassified defect module follows a Weibull distribution of the hazard form, $\hat{K}t^{\hat{m}}$.
\end{assumption}
All together, from Assumption \ref{Assumption10}, we calculate the hazard rate of the software system that is summed up from the hazard rates of the individual components (software modules). Formally, for the same newly developed software $S$ with $n$ modules, let us define a random variable $Y$ that takes a value $\hat{K}t^{\hat{m}}$ if the module $M_i$ is classified wrongly into the clean class, and takes 0 if the module $M_i$ is classified correctly into the clean class. This is given as:
\begin{equation}
\label{IndicatorRV-Yi}
Y_i = \begin{dcases*}
\hat{K}t^{\hat{m}}, & if $M_i$ is classified wrongly into clean class. \\
0,                  & if $M_i$ is classified correctly into clean class.
\end{dcases*}
\end{equation}
Here, we are assuming a similar type of distribution for each misclassified defective software module. In Equation \ref{IndicatorRV-Yi}, for the parameters $\hat{K}>0, \hat{m}>-1$, and $t>0$, the value $\hat{K}t^{\hat{m}}$ indicates the Weibull distribution for the random variable $Y_i$.

From the Assumption \ref{Assumption2}, since we set the fact that an organisation utilises the services of the traditional SDP model (that is developed using batch learning), and from Assumption \ref{Assumption3}, we ensure a similar probability for the predicted value of each newly developed module, then similar to the case of defining the probability value for the random variable $X_i$, in this case, each predicted clean module going into the wrong class is assigned with the probability, $p$. This is represented as:
\begin{equation}
\label{Probability-Y}
    \textbf{Pr}[Y_i=\hat{K}t^{\hat{m}}] = p, \text{ and, }  \textbf{Pr}[Y_i=0]=1-p, \text{ for } 1 \leq i \leq l.
\end{equation}

Now, we estimate the total number of possible hazard instances in the software, $P$. Similar to the case of random variable $X$, since all the modules are tested randomly and independently, we treat each pair of predictions as mutually independent. Furthermore, we assume that the assumptions \ref{Assumption1}-\ref{Assumption3}, and \ref{Assumption5}-\ref{Assumption9}, must hold true for defining the following random variable, $Y$. Now, in a software, the total number of possible hazard instances ($Y$) is estimated as follows: 

\begin{equation}
\label{SumofIndependentTrials-Y}
    Y = \sum_{i=1}^{l} Y_i
\end{equation}
Here, the random variable $Y$ represents the total possible hazard instances in the software, $P$. The random variable derived in Equation \ref{SumofIndependentTrials-Y} is nothing but the hazard rate in the SDP-based software, that satisfies Assumption \ref{Assumption10}. Now, let $\hat{z}_Y(t)$ be the hazard rate of the SDP-based software, then we have that:
\begin{equation}
\label{HazardRate-Y}
    \hat{z}_Y(t) = Y
\end{equation}

Similar to the definition of the random variable $X_i$, since the SDP model predicts the class label for each newly developed software module independently with similar probability $p$, $Y_i$ becomes a Bernoulli trial. Now, for the independent Bernoulli random variables $Y_1, Y_2, \cdots, Y_l$, their sum ($Y$) is in binomial distribution. Below, we derive the expected number of hazards that may occur in the entire software. The mean of the binomial distribution is derived as follows (by the linearity of expectation):
\begin{multline}
\label{Expectation-Y}
    \mu_{\hat{z}_Y} = \mathbb{E}[Y] = E\Big[\sum_{i=1}^{l} Y_i\Big] = \sum_{i=1}^{l} \mathbb{E}[Y_i] \\
    = \sum_{i=1}^{l} \big[[\hat{K}t^{\hat{m}}*\textbf{Pr}[\textit{Y}_i=\hat{K}t^{\hat{m}}]+0*\textbf{Pr}[\textit{Y}_i=0]\big] \\ = \sum_{i=1}^{l} p\hat{K}t^{\hat{m}} = lp\hat{K}t^{\hat{m}}
\end{multline}
The expected number of hazard instances (that is, $\mu_{\hat{z}_Y}$) in Equation \ref{Expectation-Y} are nothing but the scaled form of the Weibull distribution of the hazard instances in the software $P$. In this case, each random variable $Y_i$ assumes a value $\hat{K}t^{\hat{m}}$ with the probability $p$, and 0 with probability, $1-p$. In Equation \ref{Expectation-Y}, we follow the linearity of expectation because, the random variables $Y_i, i\in\{1,2, \cdots, l\}$ are assumed to be independent of each other. Now, in the following subsections, we derive the tight bounds based on the hazard rate (and reliability) of the software.
\subsubsection{The tight bounds in terms of the hazard rate of a software}
Now, using the facts from Equations \ref{SumofIndependentTrials-Y}, and \ref{Expectation-Y}, and from the Definition \ref{Definition: Weibull Hazard}, we provide tight bounds on the possibility of achieving fewer hazards in software that is implemented using SDP models than manually tested software.
\begin{theorem}
\label{Theorem-Weibull-Hazard-Y}
Let $Y_1, Y_2, \dots, Y_l$ be the independent Bernoulli trials such that for, $1\leq i \leq l$, $Pr[Y_i = \hat{K}t^{\hat{m}}] = p$, where $0 < p < 1$. Then $\exists$ parameters $K>0, m>-1, \hat{K}, \hat{m}$, time $t>0$. Then for Y = $\sum_{i=1}^{l} Y_i, \mu_{\hat{z}_Y} = \mathbb{E}[Y] = lp\hat{K}t^{\hat{m}}$, and for the Weibull hazard rate, $Kt^m$:
\begin{align}
    Pr[Y < Kt^m] < e^{\frac{-[lp\hat{K}t^{\hat{m}}-Kt^m]^2}{2lp\hat{K}t^{\hat{m}}}}
\end{align}
\end{theorem}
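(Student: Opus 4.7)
The plan is to mirror the proof of Theorem~\ref{Theorem-Weibull-Hazard} essentially verbatim, with the role of $X$ replaced by $Y$ and the role of the expectation $\mu_{\hat{z}}=lp$ replaced by the new expectation $\mu_{\hat{z}_Y}=lp\hat{K}t^{\hat{m}}$ derived in Equation~\ref{Expectation-Y}. That is, I would start from the Chernoff lower-tail bound of Equation~\ref{LowerChernoff}, namely $\Pr[X<(1-\delta)\mu]<e^{-\mu\delta^2/2}$, applied to the sum $Y$ with mean $\mu_{\hat{z}_Y}$, and then force the deviated point $(1-\delta)\mu$ to coincide with the Weibull hazard rate $Kt^{m}$ of the manually tested software.

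Concretely, I would (i) set $(1-\delta)\,\mu_{\hat{z}_Y}=Kt^{m}$ and solve for
\[
\delta \;=\; 1-\frac{Kt^{m}}{lp\hat{K}t^{\hat{m}}},
\]
exactly as in the derivation of Equation~\ref{Weibull-Hazard-Delta}; (ii) plug this $\delta$ and $\mu_{\hat{z}_Y}=lp\hat{K}t^{\hat{m}}$ into the Chernoff lower tail to obtain
\[
\Pr[Y<Kt^{m}] \;<\; \exp\!\Bigl(-\tfrac{1}{2}\,lp\hat{K}t^{\hat{m}}\bigl[1-\tfrac{Kt^{m}}{lp\hat{K}t^{\hat{m}}}\bigr]^{2}\Bigr);
\]
and (iii) simplify the exponent by pulling $lp\hat{K}t^{\hat{m}}$ inside the square to produce the stated bound $e^{-(lp\hat{K}t^{\hat{m}}-Kt^{m})^{2}/(2\,lp\hat{K}t^{\hat{m}})}$. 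The constraint $0<\delta\le 1$ that the Chernoff bound requires translates here into $0<Kt^{m}<lp\hat{K}t^{\hat{m}}$, which is consistent with the standing assumption that the manually tested hazard $Kt^{m}$ sits below the SDP-based expectation $\mu_{\hat{z}_Y}$.

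The main obstacle, and the point at which the plan requires care rather than mere substitution, is that the Chernoff form in Equation~\ref{LowerChernoff} is stated for sums of $\{0,1\}$-valued Bernoulli trials, whereas each $Y_i$ takes values in $\{0,\hat{K}t^{\hat{m}}\}$. The cleanest way to legitimise the substitution is to observe that $Y_i=\hat{K}t^{\hat{m}}\,X_i$ for the Bernoulli $X_i$ of Equation~\ref{IndicatorRV}, so $Y=\hat{K}t^{\hat{m}}\,X$ and the event $\{Y<Kt^{m}\}$ is identical to $\{X<Kt^{m}/(\hat{K}t^{\hat{m}})\}$; one then applies the genuine $\{0,1\}$-Bernoulli Chernoff lower tail to $X$ with mean $lp$ and deviation parameter $\delta'=1-Kt^{m}/(lp\hat{K}t^{\hat{m}})$. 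However, this route yields the exponent $-lp(\delta')^{2}/2$, which differs from the claimed exponent by a factor of $\hat{K}t^{\hat{m}}$ in the denominator. So the subtlety I would flag is that to recover exactly the stated bound one has to follow the paper's convention of treating $\mu_{\hat{z}_Y}$ as if it were the Bernoulli-sum mean and applying Chernoff in the rescaled form, which is what the surrounding proofs do; mentioning this explicitly in the write-up will make the derivation reproducible while being transparent about the inherited convention.
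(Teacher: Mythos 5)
Your plan follows the paper's own proof of Theorem~\ref{Theorem-Weibull-Hazard-Y} step for step: the paper likewise equates $(1-\delta)\mu$ with $Kt^{m}$, replaces $\mu$ by $\mu_{\hat{z}_Y}=lp\hat{K}t^{\hat{m}}$ to get $\delta = 1 - Kt^{m}/(lp\hat{K}t^{\hat{m}})$, and substitutes into Equation~\ref{LowerChernoff} to obtain exactly the display in your step (ii) before simplifying. So at the level of strategy your proposal and the paper's proof coincide.

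The subtlety you flag in your final paragraph is, however, a genuine gap, and the paper does not resolve it --- it simply applies Equation~\ref{LowerChernoff} with $\mu=lp\hat{K}t^{\hat{m}}$ as though $Y$ were a sum of $\{0,1\}$-valued trials, which it is not. Your rescaling argument is the correct one: writing $Y=\hat{K}t^{\hat{m}}X$ with $X=\sum_i X_i$ a genuine Bernoulli sum of mean $lp$, the event $\{Y<Kt^{m}\}$ equals $\{X<(1-\delta)lp\}$ with the same $\delta$, and the Chernoff bound then yields $e^{-lp\delta^{2}/2}=e^{-(lp\hat{K}t^{\hat{m}}-Kt^{m})^{2}/(2lp(\hat{K}t^{\hat{m}})^{2})}$, with $(\hat{K}t^{\hat{m}})^{2}$ rather than $\hat{K}t^{\hat{m}}$ in the denominator of the exponent. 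This agrees with the stated bound only when $\hat{K}t^{\hat{m}}=1$; it implies the stated bound when $\hat{K}t^{\hat{m}}\le 1$ (the claimed exponent is then less negative than the provable one), but for $\hat{K}t^{\hat{m}}>1$ the claimed bound is strictly stronger than what the inequality delivers, and neither the paper's argument nor the ``convention'' route establishes it. If you reproduce the paper's proof verbatim you inherit this flaw; the defensible statement is the bound with $(\hat{K}t^{\hat{m}})^{2}$ in the denominator, or the theorem as stated under the additional hypothesis $\hat{K}t^{\hat{m}}\le 1$. You should state this restriction explicitly rather than merely noting the convention.
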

\begin{proof}
We know from Equation \ref{Expectation-Y}, the expected number of hazards (which are assumed to be observed from the dormant defective modules, and each dormant defect can cause some specific form of Weibull distribution of the hazard rate) in a software system that uses the SDP model in a given time period $[0,t]$:
\begin{align*}
    \mu_{\hat{z}_Y} = lp\hat{K}t^{\hat{m}}
\end{align*}
We know for some $0<\delta \leq 1$, and $\mu$, and from Equation \ref{LowerChernoff}, using Chernoff bound, the lower tail bound for the sum of independent Bernoulli trials, \textit{Y}, is:
\begin{align*}
    Pr[Y < (1-\delta)\mu] < e^{\frac{-\mu\delta^2}{2}}
\end{align*}
Now, we wish to obtain a tight lower bound that the random variable, $Y$, deviates far from the hazard rate of the manually tested software, $Kt^m$, in a given time period $[0,t]$. By equating the value of $(1-\delta)\mu$ in Equation \ref{LowerChernoff} with $z(t)$ in \ref{Weibull-Hazard}, then we have:
\begin{align*}
    (1-\delta)\mu = Kt^m
\end{align*}
Since, we are calculating the possible tight bound for the deviation of the random variable $Y$, we denote the expected number of hazards as $\mu_{\hat{z}_Y}$, in place of $\mu$. Now, substituting $\mu_{\hat{z}_Y}$ (from Equation \ref{Expectation-Y}) in the above equation, then we have that:
\begin{align}
\label{Weibull-Hazard-Delta-Y}
    \Rightarrow \delta = 1 - \frac{Kt^m}{lp\hat{K}t^{\hat{m}}}
\end{align}
Here, the value of $\delta$ defines the left margin value from the expectation, $\mu_{\hat{z}_Y}$. 

Since the hazard rate of the software system (which is tested manually) is assumed to follow the Weibull distribution, the tight lower bound on the probability of the hazard rate of the software system (which is tested using the SDP model), which is deviated below from the expected value $Kt^m$ (where $Kt^m$ is the scaled down value of the expectation, $\mu_{\hat{z}_Y}$), is derived using Equation \ref{LowerChernoff} as:
\begin{align}
    Pr[Y < Kt^m] < e^{\frac{-lp\hat{K}t^{\hat{m}}\big[1-\frac{Kt^m}{lp\hat{K}t^{\hat{m}}}\big]^2}{2}}
\end{align}
In the above equation, for some $K>0, m>-1, \hat{K}>0, \hat{m}>-1,$ and $t>0$, the value $Kt^m$ is assumed to be below the expectation, $lp\hat{K}t^{\hat{m}}$. Now, few steps of simplification results in the proof for this Theorem.
\end{proof}
The Theorem \ref{Theorem-Weibull-Hazard-Y} shows that the probability of occurrence of fewer hazards in the software that uses SDP is lower than the total hazards in the same software that is tested by humans, and is bound by the maximum value of $e^{\frac{-[lp\hat{K}t^{\hat{m}}-Kt^m]^2}{2lp\hat{K}t^{\hat{m}}}}$. Aforementioned, here, the proof is constructed based on each random variable $Y_i$ takes a value from the specific form of the Weibull distribution. Thus, the difference between the Theorems \ref{Theorem-Weibull-Hazard} and \ref{Theorem-Weibull-Hazard-Y} is based on the definition of the expectation of the random variable.
\subsubsection{The tight bounds in terms of the reliability of a software}
In this section, we first provide a lemma that calculates the reliability of a software from the total hazards in a software (from Equation \ref{SumofIndependentTrials-Y}, it is represented as $Y$, at any time, $t>0$) that is tested by using the SDP model. Later, we provide a tight upper bound for the deviation of a random variable (represented as the reliability of a software that is tested by using the SDP model) from the reliability of a manually tested software.

\begin{lemma}
\label{Lemma-Reliability-Y}
Let $Y_1, Y_2, \dots, Y_l$ be the independent Bernoulli trials, then for the hazards in the software system, Y = $\sum_{i=1}^{l} Y_i$, its reliability is:
\begin{align}
\label{Reliability-Y}
    \hat{R}_Y(t) = e^{-Yt}
\end{align}
\end{lemma}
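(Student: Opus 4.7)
The plan is to mirror the argument already given in Lemma \ref{Lemma-Reliability-X}, which derived $\hat{R}(t) = e^{-Xt}$ from the hazard-rate random variable $\hat{z}(t) = X$. Here the only change is the definition of the random variable: by Equation \ref{HazardRate-Y} we have $\hat{z}_Y(t) = Y$, where $Y = \sum_{i=1}^{l} Y_i$ is the total hazard rate of the SDP-tested software under Assumption \ref{Assumption10}. Since the generic relation between hazard rate and reliability in Equation \ref{Reliability from Hazard Rate} makes no reference to the specific distributional form of $z$, it applies equally well when $z$ is a realized value of a random variable.

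First I would invoke Equation \ref{Reliability from Hazard Rate} with $z(x)$ replaced by the SDP-side hazard rate $\hat{z}_Y(x) = Y$, obtaining
\begin{equation*}
\hat{R}_Y(t) \;=\; e^{-\int_0^t Y \, dx}.
\end{equation*}
Next I would treat $Y$ as a fixed quantity for the purpose of evaluating the integral (exactly as was done in the proof of Lemma \ref{Lemma-Reliability-X} with $X$), so that $\int_0^t Y \, dx = Yt$. Substituting back yields $\hat{R}_Y(t) = e^{-Yt}$, which is precisely the stated form.

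The main conceptual subtlety, rather than any computational obstacle, is justifying that $Y$ may be pulled out of the integral despite the fact that each summand $Y_i$ is defined via the time-dependent expression $\hat{K} t^{\hat{m}}$ in Equation \ref{IndicatorRV-Yi}. I would handle this by interpreting $Y$ at the fixed observation time $t$ as the realized hazard-rate value of the deployed system, consistent with Assumption \ref{Assumption9}, which fixes a single evaluation window $[0,t]$ for both the hazard rate and the reliability. With that reading, the sum $Y$ is a constant-in-integration-variable quantity, and the proof reduces to the same one-line computation used for Lemma \ref{Lemma-Reliability-X}. No new inequality, expectation, or bound is needed at this stage; the result is purely definitional groundwork that will be consumed in the subsequent corollary (analogous to Theorem \ref{Theorem-Weibull-Reliability}) when deriving the tight upper bound on $\Pr[\hat{R}_Y(t) > R(t)]$.
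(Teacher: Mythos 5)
Your proposal is correct and follows essentially the same route as the paper's own proof: substitute the SDP-side hazard rate $Y$ into Equation \ref{Reliability from Hazard Rate} and evaluate $\int_0^t Y\,dx = Yt$ by treating $Y$ as constant in the integration variable. Your explicit acknowledgement that each $Y_i$ carries the time-dependent value $\hat{K}t^{\hat{m}}$, and that pulling $Y$ out of the integral therefore requires fixing the observation time $t$, is a subtlety the paper silently glosses over, but it does not change the substance of the argument.
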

\begin{proof}
We know that the hazards in software that are tested by using the SDP are random variable $Y$. Then, using Equation \ref{Reliability from Hazard Rate}:
\begin{align}
\label{Reliability-Y-Initial}
    \hat{R}_Y(t) = e^{-\int_0^t Y dx}
\end{align}
Here, we use $\hat{R}_Y(t)$ to represent the reliability of the software that is tested by using the SDP model, where each misclassified defective module generates a specific form of Weibull distribution of the hazard rates in the software. Now, simplifying Equation \ref{Reliability-Y-Initial} will results in the reliability of the SDP-based software.
\end{proof}
Similar to the hazards case, to find the tight bound for the deviation of a random variable (represented using the reliability of a software that is tested by using the SDP model) far from the reliability of a manually tested software (that is, $e^{\frac{-Kt^{m+1}}{m+1}}$), we require to compute the expected reliability of a software that is tested by using the SDP model. From Lemma \ref{Lemma-Reliability-Y}, we know the reliability of a software that is tested by using the SDP model. Now the expected reliability ($\mu_{\hat{R}_Y}$ or $\mathbb{E}[\hat{R}_Y(t)]$) is derived as:
\begin{equation}
\mathbb{E}[\hat{R}_Y(t)] = \mu_{\hat{R}_Y} = \mathbb{E}[e^{-Yt}] = \mathbb{E}\Big[e^{-t\sum_{i=1}^{l} Y_i}\Big]
\end{equation}
Since the random variables ($Y_i$) are assumed to be independent, then the sum of the terms in the exponent will become the product of the exponential terms. This is given as:
\begin{equation}
\label{Expected Reliability-SDP-Y-1}
\mathbb{E}\Big[e^{-t\sum_{i=1}^{l} Y_i}\Big] = \prod_{i=1}^{l} \mathbb{E}\big[e^{-tY_i}\big]
\end{equation}
Here, the random variable, $e^{-tY_i}$ assumes a value $e^{\hat{K}t^{\hat{m}+1}}$ with probability $p$, and the value 1 with probability $1-p$. Now, using these facts, we have that from Equation \ref{Expected Reliability-SDP-Y-1}:
\begin{equation}
\label{Expected Reliability-SDP-Y-2}
\prod_{i=1}^{l} \mathbb{E}\big[e^{-tY_i}\big] = \prod_{i=1}^{l} \big[pe^{\hat{K}t^{\hat{m}+1}}+1-p\big] = \prod_{i=1}^{l} \big[p(e^{\hat{K}t^{\hat{m}+1}}-1)+1\big]
\end{equation}
We know the fact that, $1+x < e^x$. Now, using this inequality with $x = p(e^{\hat{K}t^{\hat{m}+1}}-1)$, we rewrite Equation \ref{Expected Reliability-SDP-Y-2} to obtain the expected reliability:
\begin{multline}
\label{Expected Reliability-SDP-Y}
\mathbb{E}[\hat{R}_Y(t)] = \mu_{\hat{R}_Y} = \\ \prod_{i=1}^{l} \big[p(e^{\hat{K}t^{\hat{m}+1}}-1)+1\big] < \prod_{i=1}^{l} e^{p(e^{\hat{K}t^{\hat{m}+1}}-1)} \\= e^{lp(e^{\hat{K}t^{\hat{m}+1}}-1)}
\end{multline}
The above inequality does not harm the following final bound in Theorem \ref{Theorem-Weibull-Reliability-Y}. Now using the Lemmas \ref{Lemma-Reliability-Wibull} and \ref{Lemma-Reliability-Y}, below, we provide tight bounds on achieving the maximum reliability in software (that is implemented using SDP models) compared to manually tested software.
\begin{theorem}
\label{Theorem-Weibull-Reliability-Y}
Let $Y_1, Y_2, \dots, Y_l$ be the independent Bernoulli trials such that for, $1\leq i \leq l$, $Pr[Y_i = Kt^m] = p$, where $0 < p < 1$. Then $\exists$ parameters $K>0, m>-1, \hat{K}>0, \hat{m}>-1$, and time $t>0$. Then for Y = $\sum_{i=1}^{l} Y_i, \mu_{\hat{z}_Y} = \mathbb{E}[\hat{R}_Y(t)] < e^{lp(e^{\hat{K}t^{\hat{m}+1}}-1)}$, and for the Weibull Reliability function, $e^{\frac{-Kt^{m+1}}{m+1}}:$
\begin{align}
    Pr\Big[e^{-Yt} > e^{\frac{-Kt^{m+1}}{m+1}}\Big] < e^{\frac{-\Big[e^{lp(e^{\hat{K}t^{\hat{m}+1}}-1)}-\frac{Kt^m}{m+1}\Big]^2}{2e^{lp(e^{\hat{K}t^{\hat{m}+1}}-1)}}}
\end{align}
\end{theorem}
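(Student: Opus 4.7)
The plan is to mirror the two-step argument used in Theorem \ref{Theorem-Weibull-Reliability}, namely convert the upper-tail reliability question into a lower-tail question on the underlying Bernoulli sum, then feed that lower tail to the Chernoff bound (Equation \ref{LowerChernoff}) with the expected-reliability $\mu_{\hat{R}_Y}$ playing the role of the mean. I would assume, as in the paper's earlier proofs, that the statement already fixes parameters so that $Kt^m/(m+1)$ sits below the expectation $\mu_{\hat{R}_Y}$ on $[0,t]$, since the Chernoff lower tail is stated only for $(1-\delta)\mu$ with $0<\delta\le 1$.

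First I would rewrite the event of interest by monotonicity of $\log$ and cancellation of $t>0$:
\begin{equation*}
    \Pr\!\Big[e^{-Yt} > e^{-Kt^{m+1}/(m+1)}\Big] = \Pr\!\Big[Yt < \tfrac{Kt^{m+1}}{m+1}\Big] = \Pr\!\Big[Y < \tfrac{Kt^m}{m+1}\Big].
\end{equation*}
This is precisely the same reduction used inside Theorem \ref{Theorem-Weibull-Reliability}; the novelty here is only that $Y$, unlike $X$, is a sum of scaled Bernoulli variables with $\Pr[Y_i=\hat K t^{\hat m}]=p$. The Chernoff lower tail still applies to $Y$ through the same moment-generating-function argument, since the variables remain independent and non-negative, and in this regime we can treat $\mu_{\hat{R}_Y}$ exactly as the proof of Theorem \ref{Theorem-Weibull-Reliability} treats $\mu_{\hat{R}}$.

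Next I would pick the band parameter by matching the threshold:
\begin{equation*}
    (1-\delta)\mu = \tfrac{Kt^m}{m+1} \quad\Longrightarrow\quad \delta = 1 - \tfrac{Kt^m}{(m+1)\mu},
\end{equation*}
and replace $\mu$ by the expected-reliability expression derived in Equation \ref{Expected Reliability-SDP-Y}, namely $\mu_{\hat{R}_Y} < e^{lp(e^{\hat K t^{\hat m+1}}-1)}$. Plugging the resulting $\delta$ into $e^{-\mu\delta^2/2}$ and collecting terms gives
\begin{equation*}
    \Pr\!\Big[Y < \tfrac{Kt^m}{m+1}\Big] < \exp\!\bigl(-\tfrac{1}{2\mu_{\hat R_Y}}[\mu_{\hat R_Y} - \tfrac{Kt^m}{m+1}]^{2}\bigr),
\end{equation*}
after which substituting the closed form of $\mu_{\hat{R}_Y}$ produces exactly the stated bound.

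The main obstacle, as in Theorem \ref{Theorem-Weibull-Reliability}, is conceptual rather than computational: one has to justify using the upper estimate $e^{lp(e^{\hat K t^{\hat m+1}}-1)}$ in place of the exact expectation inside the Chernoff formula. I would handle this by observing that the function $\mu \mapsto e^{-(\mu - c)^{2}/(2\mu)}$ (with $c = Kt^m/(m+1)$ fixed and $\mu > c$) is monotone in a way that keeps the inequality in the right direction once $\mu_{\hat R_Y}$ is replaced by its upper estimate, matching the convention adopted silently throughout Sections \ref{The Proofs} and \ref{Supplements}. Everything else is algebraic simplification analogous to Theorem \ref{Theorem-Weibull-Reliability}, so I expect the write-up to be short, essentially a two-line reduction followed by a substitution.
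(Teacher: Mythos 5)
Your proposal reproduces the paper's own proof of Theorem \ref{Theorem-Weibull-Reliability-Y} essentially step for step: the same reduction of the upper-tail reliability event to $\Pr[Y < Kt^m/(m+1)]$, the same choice of $\delta$ from $(1-\delta)\mu = Kt^m/(m+1)$, and the same substitution of the expected-reliability estimate $e^{lp(e^{\hat{K}t^{\hat{m}+1}}-1)}$ into the Chernoff form $e^{-\mu\delta^2/2}$ followed by algebraic simplification. The only difference is that you explicitly flag the issue of inserting the upper estimate of $\mu_{\hat{R}_Y}$ (rather than an exact mean) into the Chernoff formula, a point the paper passes over silently, so your write-up is if anything slightly more careful than the original.
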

\begin{proof}
The proof for this upper tail is very similar to the proof for the lower tail, as we saw in Theorem \ref{Theorem-Weibull-Hazard-Y}. From the Lemmas \ref{Lemma-Reliability-Y} and \ref{Lemma-Reliability-X}, as before,
\begin{multline}
\label{Formatting-Y}
     Pr\Big[e^{-Yt} > e^{\frac{-Kt^{m+1}}{m+1}}\Big] =  Pr\Big[e^{Yt} < e^{\frac{Kt^{m+1}}{m+1}}\Big] \\
     = Pr\Big[Yt < \frac{Kt^{m+1}}{m+1}\Big]
     = Pr\Big[Y < \frac{Kt^m}{m+1}\Big]
\end{multline}
Initially, we wish to obtain a tight upper bound for a random variable, $e^{-Yt}$, deviates far from the value $e^{\frac{-Kt^{m+1}}{m+1}}$. Now, from Equation \ref{Formatting-Y}, we wish to obtain the tight lower bound for a random variable $Y$, deviates far below from the value, $\frac{Kt^m}{m+1}$. Here, from Equations \ref{LowerChernoff} and \ref{Formatting-Y}, for some $K>0, m>-1, \hat{K}>0, \hat{m}>-1,$ and $t>0$, the value $\frac{Kt^m}{m+1}$ is assumed to be below the expectation, $e^{lp(e^{\hat{K}t^{\hat{m}+1}}-1)}$. Now, equating the $(1-\delta)\mu$ in Equation \ref{LowerChernoff} with $\frac{Kt^m}{m+1}$ in \ref{Formatting-Y}, then we have:
\begin{align}
\label{Delta-Previous Form-Y}
    (1-\delta)\mu = \frac{Kt^m}{m+1}
\end{align}
Since, we are calculating the possible tight bound for the deviation of the random variable $Y$, we denote the expected number of hazards as $\mu_{\hat{z}_Y}$, in place of $\mu$. Now, substitute the value of $\mu_{\hat{z}_Y}$ (from Equation \ref{Expectation-Y}) in Equation \ref{Delta-Previous Form-Y}, then we have:
\begin{align}
\label{Weibull-Reliability-Delta-Y}
    \Rightarrow \delta = 1 - \frac{Kt^m}{(m+1)e^{lp(e^{\hat{K}t^{\hat{m}+1}}-1)}}
\end{align}
%Here, the value of $\delta$ defines the left margin value from the expectation, $\mu_{\hat{z}_Y}$.
Substitute the value of $\delta$ (from Equation \ref{Weibull-Reliability-Delta-Y}) in Equation \ref{LowerChernoff} then:
\begin{align}
    Pr\Big[Y < \frac{Kt^m}{m+1}\Big] < e^{\frac{e^{lp(e^{\hat{K}t^{\hat{m}+1}}-1)}\bigg[1 - \frac{Kt^m}{(m+1)e^{lp(e^{\hat{K}t^{\hat{m}+1}}-1)}}\bigg]^2}{2}}
\end{align}
A few steps of simplification will ensure the proof of this theorem.
\end{proof}
The Theorem \ref{Theorem-Weibull-Reliability-Y} provides evidence that the probability of obtaining better reliability in software that uses the SDP model is higher than the reliability of the same software that is tested by humans and is bound by the maximum value of $e^{\frac{-\Big[e^{lp(e^{\hat{K}t^{\hat{m}+1}}-1)}-\frac{Kt^m}{m+1}\Big]^2}{2e^{lp(e^{\hat{K}t^{\hat{m}+1}}-1)}}}$. Similar to the result of Theorem \ref{Theorem-Weibull-Hazard-Y}, the proof of Theorem \ref{Theorem-Weibull-Reliability-Y} is constructed based on each random variable $Y_i$ takes a value from the specific form of the Weibull distribution (later, using this hazard model, we calculated the reliability). Thus, the difference between the Theorems \ref{Theorem-Weibull-Reliability} and \ref{Theorem-Weibull-Reliability-Y} is based on the definition of the expectation of the random variable.  %In this case, as discussed above, each misclassified defective module generates a Weibull distribution of the hazard rate (later, we estimate the reliability from the hazard rate) with parameters $\hat{K}, \hat{m}$, and $t$. Thus the total hazard instances in a software that is tested by using the SDP model is a scaled form of the Weibull distribution.

The Theorems \ref{Theorem-Weibull-Hazard-Y} and \ref{Theorem-Weibull-Reliability-Y} are obtained by altering the definition of the random variable. From Theorems \ref{Theorem-Weibull-Hazard-Y} and \ref{Theorem-Weibull-Reliability-Y}, similar to the supplementary proofs provided in Section \ref{Supplements}, we can derive numerous proofs in terms of both the hazard rate and reliability.
\subsection{Analysis of the Theorems \ref{Theorem-Weibull-Hazard}, \ref{Theorem-Weibull-Reliability}, \ref{Theorem-Weibull-Hazard-Y}, and \ref{Theorem-Weibull-Reliability-Y}}
\label{Analysis of the Theorems 1 to 4}
%The Theorem \ref{Theorem-Weibull-Hazard} describes the sharp bounds only by considering the expected value of the random variable, $X$. Now, substituting the value of $\mu_{\hat{z}}$ (that is, $lp$) in Theorem \ref{Theorem-Weibull-Hazard}, then we have that:
%\begin{align*}
%    Pr[X < Kt^m] < e^{\frac{-(\mu_{\hat{z}}-Kt^m)^2}{2\mu_{\hat{z}}}} = e^{\frac{-(lp-Kt^m)^2}{2lp}}
%\end{align*}
The change in the behaviour of the bounds in Theorems \ref{Theorem-Weibull-Hazard} and \ref{Theorem-Weibull-Reliability} can be described based on (1) the number of predicted clean modules and (2) the time at which we observe the hazard values. First, since the probability ($p$) is a fixed value for the target application, assume that at any arbitrary fixed values of the parameters such as $K, m$ (assume $K$ and $m$ are derived based on some criteria)\footnote{Estimating the parameters such as $K, m$, at any time $t$, are important in constructing the proofs. However, for this preliminary work, we are assuming that the values of these parameters must obey the condition imposed by $\delta$ (that is, $0<\delta\leq1$).}, and time $t$, the probability of observing the minimum hazards in a software that is tested by using the SDP models when compared with the hazard rate of a manually tested software is exponentially small in $l$, implying that the larger the predicted clean modules, the more likely it is that the true colours show through. Second, assume the parameters $K$ and $m$ are derived based on some criteria in a manually tested software. Now, as the time progresses (that is, for the large values of time, $t$), for some large values of the parameters $K$ and $m$ (assume these parameters must obey in accordance with the condition imposed by $\delta$ (that is, $0<\delta\leq1$)), the terms $e^{\frac{-(lp-Kt^m)^2}{2lp}}$ (from Theorem \ref{Theorem-Weibull-Hazard}) and $e^{\frac{-e^{lp(1-e^{-t})}}{2}\big[e^{lp(e^{-t}-1)}-\frac{Kt^m}{m+1}\big]^2}$ (from Theorem \ref{Theorem-Weibull-Reliability}) will become extremely small or even close to 0. Hence, as the time progresses (assume we already have observed the $l$ number of clean modules), we can derive an ineffectiveness of the SDP models (represented in terms of hazard rate and reliability) when compared with the manually tested software systems, in the real-world testing environments.

Since the Corollaries presented in Section \ref{Supplements} are derived from the Theorems \ref{Theorem-Weibull-Hazard} and \ref{Theorem-Weibull-Reliability}, the above same analysis is applicable to describe the behaviour of the bounds.

Similar to the analysis provided for the Theorems \ref{Theorem-Weibull-Hazard} and \ref{Theorem-Weibull-Reliability}, the change in the behaviour of the bounds in Theorems \ref{Theorem-Weibull-Hazard-Y} and \ref{Theorem-Weibull-Reliability-Y} also can be described based on, (1) the number of predicted clean modules, (2) the time at which we observe the hazard values. First, similar to the above analysis, since the probability ($p$) is a fixed value for the target application, assume at any fixed values of the parameters - $K, m, \hat{K}, \hat{m}$\footnote{Estimating the parameters such as $K, m,\hat{K}$, and $\hat{m}$ at any time $t$, are important in constructing the proofs. However, for this preliminary work, we are assuming that the values of these parameters must obey the condition imposed by $\delta$ (that is, $0<\delta\leq1$). Among all these parameters, $\hat{K}$ and $\hat{m}$ are used to describe the hazard rate induced from the misclassified defective module. We assume that a similar form of the Weibull distribution (that is, $\hat{K}t^{\hat{m}}$) is used to describe the hazard incidents in a software that uses SDP model. Since the Weibull distribution is an ideal representation of the hazard rate of a software or a software component (module), we apply the same representation to describe the hazard rate of each misclassified defective module.}, and time $t$, when compared with the hazard rate of a manually tested software, the probability of observing the minimum hazards in a software which is tested by using the SDP models is exponentially small in $l$, or even close to zero, implying that, with a large number of predicted clean modules, we are more likely to observe the true colours of the developed prediction model. Second, assume the parameters $K$ and $m$ are derived based on some criteria in a manually tested software. Now, as the time progresses (that is, for the large values of time, $t$), for some large values of the parameters $K, m, \hat{K}$, and $\hat{m}$ (assume these parameters must obey in accordance with the condition imposed by $\delta$ (that is, $0<\delta\leq1$)), the terms $e^{\frac{-[lp\hat{K}t^{\hat{m}}-Kt^m]^2}{2lp\hat{K}t^{\hat{m}}}}$ (from Theorem \ref{Theorem-Weibull-Hazard-Y}) and $e^{\frac{-\Big[e^{lp(e^{\hat{K}t^{\hat{m}+1}}-1)}-\frac{Kt^m}{m+1}\Big]^2}{2e^{lp(e^{\hat{K}t^{\hat{m}+1}}-1)}}}$ (from Theorem \ref{Theorem-Weibull-Reliability-Y}) will become extremely small or even close to 0. As a result, as time passes (assuming we have already observed the $l$ number of clean modules), we can conclude that the SDP models are ineffective in real-world testing environments.

In the above analysis, we are using a specific form of the hazard rate to describe how each defective module causes system failures. We chose the special form of the Weibull distribution to just describe that the random variable may take any form of the failure function instead of simply taking either 1 or 0. It is worth noting that the parameters such as $\hat{K}$, and $\hat{m}$ can be estimated prior to providing the proofs (in Theorems \ref{Theorem-Weibull-Hazard-Y} and \ref{Theorem-Weibull-Reliability-Y}).

\section{Conclusion}
\label{Conclusion}
This work aims to fill a gap in the literature by addressing the feasibility of the SDP models in real-world testing environments. We made a few assumptions in order to provide theoretical proofs for the feasibility of the binary defect prediction models. These assumptions enable easy computations in providing the proofs. Relaxing the stated assumptions may lead us to observe sharp bounds.
%This is because, at the time of testing the software, the tester observes two types of predictions (such as defective or clean) and, the predicted clean modules may contain defective instances. 

To provide the feasibility of the SDP models, first, from the test set, we measured the probability value as the percentage of the false negatives over the total observed cleans. Assuming that any optimal classifier produces both the \textit{defect injections} (also called the false negatives) and \textit{original cleans} (also called true negatives) on the target project, and each defective injection can cause a single failure in the system, using this probability, we defined the distribution of the failures in the target software project. We then calculated the expected number of failure incidents in the software as a result of the defective injections.

Using the hazard rate and its reliability of the same software that is tested under the manual testing conditions, we provided the tight bounds using the famous tail inequality technique called the Chernoff bound. Our approach is mainly focus on providing the bounds by comparing the hazard rate (and its reliability) of the software that is tested using the SDP model with the hazard rate (and its reliability) of the manually tested software. The tight bounds are provided by assuming the manually tested software follows the Weibull distribution of the hazard rate. These bounds are dynamic and can be replaced with any hazard function at hand to measure the feasibility of the SDP models. To illustrate the possibility of using various hazard models, we have also provided some proofs based on the supplements of the Weibull distribution of the hazard models.

In terms of the Weibull hazard rate, we derived a maximum bound of $e^{\frac{-(\mu_{\hat{z}}-Kt^m)^2}{2\mu_{\hat{z}}}}$ as the probability of observing fewer hazards in the software (that uses SDP) than the total hazards in the same software that is tested by humans. Similarly, in terms of the reliability of a software that is derived from the Weibull distribution of the hazard rate, we derived a maximum bound of $e^{-\big[e^{lp(e^{-t}-1)}-\frac{Kt^m}{m+1}\big]^2\frac{1}{2e^{lp(e^{-t}-1)}}}$ as the probability of obtaining more reliability in the software that uses SDP than the reliability of the same software that is tested by humans. The summary of these bounds indicates that, as the random variable (that is modelled as the hazard rate or the reliability) deviates far from the expectation, we conclude the inefficiency of the defect prediction models in the real-world testing environment.

In the above bounds, we assumed the occurrence of a single failure from each misclassified defective instance. In addition to the above bounds, we also derived two more bounds in terms of the hazard rate and reliability by assuming each misclassified defective instance generates a Weibull distribution of the hazard rate instead of generating a single failure in the system. In terms of the Weibull hazard rate, we derived a maximum bound of $e^{\frac{-[lp\hat{K}t^{\hat{m}}-Kt^m]^2}{2lp\hat{K}t^{\hat{m}}}}$ as the probability of observing fewer hazards in the software (that uses SDP) than the total hazards in the same software that is tested by humans. Similarly, in terms of the reliability of a software that is derived from the Weibull distribution of the hazard rate, we derived a maximum bound of $e^{\frac{-\Big[e^{lp(e^{\hat{K}t^{\hat{m}+1}}-1)}-\frac{Kt^m}{m+1}\Big]^2}{2e^{lp(e^{\hat{K}t^{\hat{m}+1}}-1)}}}$ as the probability of obtaining more reliability in the software that uses SDP than the reliability of the same software that is tested by humans. From these two bounds, we can derive numerous proofs by simply substituting the Weibull hazard function with any other function form of the hazard rate.

We believe that providing a critique of the developed binary classification model (in our case, the software defect prediction model) in a real-world testing environment is novel in machine learning research and has the potential to provide insight into the feasibility of other applications. Some of the potential future works based on this idea are listed in the following section.
\subsection{Future Plans}
\label{Future Plans}
The theorems \ref{Theorem-Weibull-Hazard}, \ref{Theorem-Weibull-Reliability}, \ref{Theorem-Weibull-Hazard-Y}, and \ref{Theorem-Weibull-Reliability-Y} provide preliminary bounds on the feasibility of the developed binary classification model (SDP model) in real-world testing environments. Within the scope of this work, the extensions of Theorems \ref{Theorem-Weibull-Hazard}, \ref{Theorem-Weibull-Reliability}, \ref{Theorem-Weibull-Hazard-Y}, and \ref{Theorem-Weibull-Reliability-Y} are numerous. A few examples include:
\begin{enumerate}
    \item In this work, we assume a Weibull model of the hazard rate for the manually tested software in providing the proofs. The bounds provided in theorems \ref{Theorem-Weibull-Hazard}, \ref{Theorem-Weibull-Reliability}, \ref{Theorem-Weibull-Hazard-Y}, and \ref{Theorem-Weibull-Reliability-Y} become more application-specific if the state-of-the-art hazard (and reliability) models are used in the construction of the proof instead of using the Weibull distribution.
    \item This work assumes independence among the random variables (among $X_i$s as well as, among $Y_i$s). Instead, new bounds will be derived assuming the dependency among the random variables. This ensures deriving the bounds in the presence of cascading failures in the software system. Because it is a fact that the coupling between the modules is present in any software \cite{pressman2005software}, the occurrence of failure in one module may cause failure in the coupled module \cite{lyu1996handbook}. As a result, the software may experience more failures due to the cascading effect \cite{lyu1996handbook}. In this scenario, the dependence among the random variables ensures the coupling between the software modules. Hence, finding the usefulness of the defect prediction model in the real-world testing environments assuming the dependency among the random variables is a well needed research.
\end{enumerate}

We hope this work is the basis for some potential future research directions. That is, the generalizability of this work. We hope that this work opens up a new avenue for analysing the impact of a prediction model (especially a two-class classification model) on the targeted application (of any kind) using probabilistic bounds. A few example applications include: optical character recognition (OCR) systems, machine translation systems, fraud detection systems, tumour detection systems, etc.
\bibliographystyle{elsarticle-num} 		
\bibliography{PostSDP}

%\appendix
\end{document}